\documentclass{article}

%

\usepackage{amsmath,amsthm,amssymb,mathtools}
\usepackage{hyperref}  
\hypersetup{colorlinks=true}
\hypersetup{linkcolor=[rgb]{.7,0,0}}
\hypersetup{citecolor=[rgb]{0,.7,0}}
\hypersetup{urlcolor=[rgb]{.7,0,.7}}

\usepackage{graphicx}
\usepackage[margin=1in]{geometry}
\usepackage{comment}
\usepackage{dsfont}
\usepackage{tikz}
\usetikzlibrary{calc}
\usepackage{algorithm}
\usepackage{paralist}
\usepackage[noend]{algpseudocode}

\usepackage{aliascnt} 

\newtheorem{theorem}{Theorem}[section]
\newtheorem*{theorem*}{Theorem}

\newaliascnt{definition}{theorem}
\newtheorem{definition}[definition]{Definition}
\aliascntresetthe{definition}

\newtheorem*{definition*}{Definition}

\newaliascnt{lemma}{theorem}
\newtheorem{lemma}[lemma]{Lemma}
\aliascntresetthe{lemma}

\newtheorem*{lemma*}{Lemma}

\newaliascnt{claim}{theorem}
\newtheorem{claim}[claim]{Claim}
\aliascntresetthe{claim}

\newtheorem*{claim*}{Claim}

\newaliascnt{fact}{theorem}

\aliascntresetthe{fact}

\newtheorem*{fact*}{Fact}

\newaliascnt{observation}{theorem}
\newtheorem{observation}[observation]{Observation}
\aliascntresetthe{observation}

\newtheorem*{observation*}{Observation}

\newaliascnt{conjecture}{theorem}

\aliascntresetthe{conjecture}

\newtheorem*{conjecture*}{Conjecture}

\newaliascnt{corollary}{theorem}
\newtheorem{corollary}[corollary]{Corollary}
\aliascntresetthe{corollary}

\newtheorem*{corollary*}{Corollary}

\newaliascnt{remark}{theorem}
\newtheorem{remark}[remark]{Remark}
\aliascntresetthe{remark}

\newtheorem*{remark*}{Remark}

\newaliascnt{proposition}{theorem}

\aliascntresetthe{proposition}

\newtheorem*{proposition*}{Proposition}

\newaliascnt{example}{theorem}

\aliascntresetthe{example}

\newtheorem*{example*}{Example}
\newcommand{\notshow}[1]{}

\usepackage{dsfont}

\DeclarePairedDelimiter{\card}{\lvert}{\rvert}
\def\alice{A}
\def\bob{B}
\def\comm{C_{\M}}
\def\shat{\mathsf{Sh}}
\def\nbr{\mathsf{nbr}}
\def\shatnbr{\shat\text{-}\nbr}

\usepackage{thm-restate}

\newcommand{\R}{\mathbb{R}}

\newcommand{\poly}{\operatorname{poly}}

\newcommand{\A}{\mathcal{A}}
\newcommand{\B}{\mathcal{B}}
\newcommand{\M}{\mathcal{M}}
\newcommand{\BigO}{\mathcal{O}}
\newcommand{\TypicalBob}{\T_{\bob}^{\mathsf{Typ}}}

\makeatletter
\newcommand\Ps@textstyle[2]{\mathbb{P}_{#1}\left[{#2}\right]}
\newcommand\Es@textstyle[2]{\mathbb{E}_{#1}\left[{#2}\right]}
\newcommand\Ps[2]{%
  \mathchoice 
  {\underset{{#1}}{\mathbb{P}}\left[{#2}\right]}
  {\Ps@textstyle{#1}{#2}}
  {\Ps@textstyle{#1}{#2}}
  {\Ps@textstyle{#1}{#2}}
}
\newcommand\Es[2]{%
  \mathchoice 
  {\underset{{#1}}{\mathbb{E}}\left[{#2}\right]}
  {\Es@textstyle{#1}{#2}}{\Es@textstyle{#1}{#2}}{\Es@textstyle{#1}{#2}}
}
\makeatother

\newcommand{\I}{\mathcal{I}}
\newcommand{\T}{\mathcal{T}}
\newcommand{\Mpar}{\M^{\mathrm{par}}}
\newcommand{\Mseq}{\M^{\mathrm{seq}}}

\title{Exponential Communication Separations between Notions of Selfishness}

\author{
  Aviad Rubinstein\thanks{
    Supported by NSF CCF- 1954927, and a David and Lucile Packard Fellowship.
    } \\
  Stanford University \\
  aviad@cs.stanford.edu
  \and Raghuvansh R. Saxena\thanks{
    Supported by the National Science Foundation CAREER award CCF-1750443 and a Microsoft PhD Fellowship.
    } \\
  Princeton University \\
  rrsaxena@cs.princeton.edu
  \and Clayton Thomas\thanks{
    Supported by NSF-CCF 1955205.}  \\
  Princeton University \\
  claytont@cs.princeton.edu
  \and S. Mathew Weinberg\footnotemark[3]  \\
  Princeton University \\
  smweinberg@princeton.edu
  \and Junyao Zhao\thanks{
    Supported by NSF CCF-1954927.
    }  \\
  Stanford University \\
  junyaoz@stanford.edu
}

\begin{document}

\maketitle

\begin{abstract}
We consider the problem of implementing a fixed social choice function between multiple players (which takes as input a type $t_i$ from each player $i$ and outputs an outcome $f(t_1,\ldots, t_n)$), in which each player must be incentivized to follow the protocol. In particular, we study the communication requirements of a protocol which: (a) implements $f$, (b) implements $f$ and computes payments that make it ex-post incentive compatible (EPIC) to follow the protocol, and (c) implements $f$ and computes payments in a way that makes it dominant-strategy incentive compatible (DSIC) to follow the protocol. 

We show exponential separations between all three of these quantities, already for just two players. That is, we first construct an $f$ such that $f$ can be implemented in communication $c$, but any EPIC implementation of $f$ (with any choice of payments) requires communication $\exp(c)$. This answers an open question of
[Fadel and Segal, 2009; Babaioff et. al., 2013].
Second, we construct an $f$ such that an EPIC protocol implements $f$ with communication $C$, but all DSIC implementations of $f$ require communication $\exp(C)$.
\end{abstract}



\clearpage
\section{Introduction}
Consider the following canonical problem: there is a set $Y$ of possible outcomes, and each of $n$ players have a type $t_i$ which determines their utility $u_i(t_i,y)$ for each outcome $y \in Y$. You have a particular \emph{social choice function} $f$ in mind, which maps a profile of types $\vec{t} = (t_1,\ldots,t_n)$ to $f(\vec{t}) \in Y$. A canonical question within Computer Science might first ask ``what is $CC(f)$, the communication complexity of $f$?'' That is, over all deterministic protocols computing $f$ among the $n$ players (who initially each know only their own type, and not that of any others), which one uses the least number of bits in the worst case?

But consider now the possibility that the players do not simply follow the intended protocol, and instead strive to maximize their own utility. The need to incentivize the players to follow the protocol motivates the entire field of Algorithmic Mechanism Design, as well as questions such as ``\emph{what is the communication complexity to implement $f$, using a protocol which incentivizes the players to follow it?}'' 

There are several formal instantiations of this question, depending on how strongly one wishes to incentivize the players. One common solution concept is ex-post incentive compatibility (EPIC), where the protocol may charge prices and it is in each player's interest to follow the protocol assuming that other players follow the protocol as well (formally, it is a Nash equilibrium to follow the protocol, no matter the other players' types). We let $CC^{EPIC}(f)$ denote the minimum communication cost of an EPIC protocol implementing $f$. Another common solution concept is dominant strategy incentive compatibility (DSIC), where the protocol may charge prices and it is in each player's interest to follow the protocol no matter what the other players do (even if that behavior is completely irrational). We let $CC^{DSIC}(f)$ denote the minimum communication cost of a DSIC protocol implementing $f$.
Because any EPIC protocol must compute $f$ and any DSIC protocol is in particular an EPIC protocol, we have $CC(f) \le CC^{EPIC}(f) \le CC^{DSIC}(f)$.

Formally, we study the following question: \emph{for a fixed $f$, how does $CC(f)$ relate to $CC^{EPIC}(f)$, and how does $CC^{EPIC}(f)$ relate to $CC^{DSIC}(f)$}?
While related directions have received substantial attention and produced a vast body of works (we overview this related work, and others, in \autoref{sec:related}), relatively little attention has been paid to these fundamental questions. Our main results are exponential separations between all three quantities (and these are the first such separations). Specifically:

\begin{theorem*}[See \autoref{thm:epic} and \autoref{thm:DsicVsEpic}] There exists $f$ such that $CC^{EPIC}(f) = \exp(CC(f))$. There exists $g$ such that $CC^{DSIC}(g) = \exp(CC^{EPIC}(g))$.
\end{theorem*}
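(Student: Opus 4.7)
I would prove both separations by the same paradigm: identify a rigid structural constraint that any incentive-compatible (EPIC, resp.\ DSIC) implementation of $f$ must satisfy, and then design $f$ so that its outcome is cheap to compute but the information forced by the constraint is expensive. The structural workhorse for the EPIC lower bound is the standard taxation/revenue-equivalence characterization: in any EPIC mechanism, the payment $p_i(t_i, t_{-i})$ is essentially determined (up to an additive constant depending only on $t_{-i}$) by the entire ``menu'' of allocations $\{f(t_i', t_{-i}) : t_i' \in T_i\}$. For the DSIC lower bound the analogous constraint is stronger: the allocation rule, viewed as a function of both players' reports, must be well-defined and satisfy cyclic monotonicity on \emph{all} profiles, not only those reached on an equilibrium path.

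\textbf{First separation.} I would construct a two-player $f$ with Alice's type a pair $(x, i) \in \{0,1\}^N \times [N]$ and Bob's type a small object. Arrange $f((x, i), t_B)$ so it depends only on $x_i$ and $t_B$; then a cheap protocol computes $f$ with $CC(f) = O(\log N)$ (Alice sends $i$, Bob replies with a short message depending on $t_B$ and $i$). For the EPIC lower bound I would argue, via the payment-determination lemma, that any EPIC protocol must convey enough information about Alice's full allocation menu $\{f((x', i'), t_B)\}_{x', i'}$ to pin down $p_A(t_A, t_B)$; by designing the utilities so that the telescoping payment formula genuinely encodes each bit of $x$ individually (rather than collapsing to a small aggregate statistic), this reduces to a hard $N$-bit communication problem (e.g., a suitable variant of \textsc{Index} or \textsc{Disjointness}), yielding $CC^{EPIC}(f) = \Omega(N) = \exp(CC(f))$. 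The main obstacle is choosing utilities so that no clever ``compression'' of $x$ lets the payment be computed without essentially revealing $x$; a careful fooling-set or rank argument on the matrix of required payment values should do it.

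\textbf{Second separation.} Here the leverage is that an EPIC protocol may condition its behavior on the other player's reported message (assumed truthful on the equilibrium path), whereas a DSIC protocol must produce correct allocations and payments against arbitrary deviations, and the induced mechanism on the full report space must itself satisfy a DSIC-style monotonicity. I would design $g$ so that on the equilibrium path the two players exchange a short ``certificate'' that localizes the computation to a small sub-problem solvable with $C$ bits, and I would define payments to deter any deviation from this path (using the equilibrium-only guarantees of EPIC). For the DSIC lower bound I would show that any DSIC allocation, being rigid on all profiles, must resolve a hard two-party problem embedded in the off-equilibrium reports (again reducing from a $\exp(C)$-hard problem such as \textsc{Disjointness}), giving $CC^{DSIC}(g) = \exp(C)$. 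The hardest step will be the EPIC upper bound: making the certificate-based protocol genuinely EPIC requires specifying payments consistent both on and off the certificate path, while keeping the total communication at $C$; I expect this to be the tightest piece of the construction, since a naive approach would have to compute the payment for every possible deviation and thereby lose the saving over DSIC entirely.
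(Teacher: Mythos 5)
There are genuine gaps in both halves of your plan, though the high-level paradigm (force high-complexity payments via IC constraints while keeping the outcome itself cheap) is the right one.

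For the first separation, you have the roles of Alice and Bob reversed, and as stated the construction does not force a large EPIC lower bound. You put the long string $x\in\{0,1\}^N$ on Alice's side and make Bob's type small, with $f$ depending only on $x_i$ and $t_B$. But in any IC implementation, the transfer to a player is a function only of the realized outcome and the \emph{other} player's type: two Alice types mapped to the same outcome against the same $t_B$ must receive the same transfer, else one would imitate the other. Hence $p_A$ factors through $(\text{outcome},\,t_B)$, and $p_B$ factors through (Bob's menu), which in turn depends on Alice's type only through the single bit $x_i$. With a small outcome space and a small $T_B$, both payment functions take only polynomially many values and can be announced in $O(\log N)$ bits, so no separation appears. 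The paper's construction (\autoref{thm:epic}) puts the $n$-bit string $b$ on \emph{Bob's} side and makes Alice's type small (essentially an index plus a few tie-breaking bits). Then Alice's payment $p(i,b)$ is pinned down up to $\pm 2^{-n}$ by local IC constraints (\autoref{lem:keyDifference}), and the telescoped difference $p(1,b)-p(n+1,b)\approx \sum_j b_j 2^j$ takes $2^n$ distinct values, one for each $b$. The lower bound is then a one-line leaf-counting argument on the protocol's output space; no reduction to \textsc{Index}, \textsc{Disjointness}, fooling sets, or rank is needed (and your proposed fooling-set argument would not directly apply to your instance because the two inputs to the communication problem are not distributed across the two parties).

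For the second separation, your intuition that DSIC must survive arbitrary off-equilibrium play whereas EPIC only must survive play consistent with $S_{-i}(t_{-i})$ is exactly right and mirrors \autoref{sec:attempt1}. However, the rest is too vague to assess and misallocates the difficulty. In the paper's construction (\autoref{sec:construction}), $f$ is implementable \emph{without transfers}, so the EPIC upper bound is the trivial protocol (Alice reveals her type, Bob decides, all transfers zero). What you describe as the ``hardest step'' — making a certificate-style protocol EPIC by choosing consistent payments both on and off the certificate path — is not the bottleneck at all, and your worry that this ``would have to compute the payment for every possible deviation'' does not arise when payments can be taken identically zero. The genuinely hard step, which your plan does not address, is the DSIC lower bound: you need to show that in any low-communication mechanism there must exist a node where Bob has ``many'' remaining types (shattered pairs) while Alice acts, and therefore a crazy strategy of Bob punishes truth-telling. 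The paper does this via a bespoke combinatorial counting argument over the game tree (\autoref{lem:UnshatteredCounting}, \autoref{lem:ShatNbrSameAction}), not by embedding \textsc{Disjointness}. That argument also requires the auxiliary reductions to perfect-information mechanisms and to mechanisms that never ask about outcome-irrelevant bits (\autoref{lem:perfectInformation}, \autoref{lem:noIrrelevantQuestions}), which your sketch omits and which are where much of the technical care is spent; a naive first-deviation argument fails for the reasons illustrated in \autoref{fig:difficulties}.
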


The gap in both cases is at most exponential, so this is the largest gap possible.\footnote{To see this, consider the following sketch: for every protocol, there exists a simultaneous protocol (with one round of communication) with at most an exponential blowup in communication. A simultaneous protocol is EPIC if and only if it is DSIC, and~\cite[Proposition 1]{FadelS09} shows that EPIC prices can be added to any simultaneous protocol for $f$ with low overhead. So the gap between the three quantities can be no larger than the gap between simultaneous and interactive communication requirements for $f$, which is at most exponential.}

\subsection{Context and Related Work}\label{sec:related}
There is a \emph{vast} literature studying the communication requirements of protocols for honest players versus truthful mechanisms for strategic players~\cite{LehmannOS02,LaviMN03, LaviS05,NisanS06,DobzinskiS06, Dobzinski07, PapadimitriouSS08,Feige09, FeigeV10, DobzinskiNS10,DobzinskiN11, BuchfuhrerDFKMPSSU10, DobzinskiV13, DanielySS15, Dobzinski16b, BravermanMW18, AssadiS19, EzraFNTW19, AssadiKSW20}. Our work certainly fits into this literature, but goes in a fairly distinct direction. Specifically, this literature nearly-ubiquitously considers comparisons between how much communication is required for \emph{some} $f$ satisfying some property (e.g.~guaranteeing an $\alpha$-approximation to the optimal welfare\footnote{The welfare of an outcome $y$ is defined as $\sum_i u_i(t_i,y)$.}) versus how much communication is required for \emph{some} EPIC implementation of \emph{some} $g$ guaranteeing that property. In particular, $f$ and $g$ may be different social choice functions, and separations normally arise because the lowest-communication $f$ guaranteeing the desired property \emph{has no EPIC implementation} --- there simply don't exist prices that make any implementation of $f$ EPIC, no matter how much communication is used.\footnote{On the other hand, if $f$ is EPIC-implementable, it is DSIC-implementable, but perhaps with exponential overhead.} 

Our work studies a fundamentally different question: for a fixed $f$ \emph{which is EPIC-implementable}, how much communication overhead is required to actually compute prices which make the implementation EPIC? For an example of this distinction, consider a single-item auction: each player has a value $v_i$ for the item. The space of outcomes can award the item to any bidder, or no one. The social choice function $f$ which gives the item to the highest bidder can be EPIC-implemented (by the second-price auction). The social choice function $g$ which gives the item to the lowest bidder cannot be EPIC-implemented (by any prices, no matter how much communication). In general, many approximation algorithms for richer settings tend to be like $g$: they are simply not implementable, no matter what. So the driving force behind all prior work is separating the approximation guarantees for efficient protocols which are EPIC-implementable (and tend to have low overhead to actually compute the prices), versus those which are not. 

There is significantly less prior work addressing our specific questions. The direction was first posed in~\cite{FadelS09}, who explicitly pose the question of $CC(f)$ versus $CC^{EPIC}(f)$, and demonstrate that $CC^{EPIC}(f)$ can be strictly larger than $CC(f)$.  Follow-up work of~\cite{BabaioffBS13} were the first to make progress on this, and show a separation of $CC(f)$ versus $CC^{EPIC}(f)$ which is linear in the number of players (so in particular, the blow-up for two players is not large). In comparison to these works, our \autoref{thm:epic} shows the maximum possible gap (exponential) with just two players, resolving the open question in~\cite{FadelS09}.

\cite[Appendix B.2]{FadelS09} defines and discusses $CC^{DSIC}$, but only considers the relationship between $CC$ and $CC^{DSIC}$ (not the gap between $CC^{EPIC}$ and $CC^{DSIC}$). \cite[Appendix C.1]{Dobzinski16b} shows that no large separation between $CC^{EPIC}(f)$ and $CC^{DSIC}(f)$ is possible for the particular setting of two player combinatorial auctions with arbitrary monotone valuations\footnote{The proof of \cite[Appendix C.1]{Dobzinski16b} relies on the fact that incentive compatible combinatorial auctions with arbitrary monotone valuations have low ``taxation complexity''. Our construction in \autoref{sec:DominantVsExPost} circumvents this theorem because its environment is a very structured subset of two player monotone combinatorial auctions, and moreover, our social choice function $f$ has high taxation complexity. }.

The study of $CC^{EPIC}(f)$ versus $CC^{DSIC}(f)$ is conceptually related to a recent push with the Economics and Computation community to understand obviously strategyproof (OSP) mechanisms~\cite{Li17,BadeG17,AshlagiG18,PyciaT19}. These works do not focus on communication complexity, but rather on characterizing implementations which satisfy OSP (a stronger, but related, definition than DSIC). In comparison to these works, our \autoref{thm:DsicVsEpic} bears technical similarity, and our approach may be useful for proving communication lower bounds on OSP implementations.

\cite{FadelS09} also study related questions for a solution concept termed ``Bayesian incentive compatibility'' (BIC), and they obtain a tight exponential separation of $CC$ and $CC^{BIC}$. \cite{BabaioffKS15} studies the solution concept termed ``truthful in expectation'' (TIE), and show that in \emph{single-parameter} settings there is no (substantial) separation between $CC(f)$ and $CC^{TIE}(f)$.\\

\noindent\textbf{Concurrent and Independent Work.} Concurrently and independently of our work, Dobzinski and Ron~\cite{DobzinskiR21} also consider the relationship between $CC$ and $CC^{EPIC}$.\footnote{Both papers were uploaded to arXiv simultaneously on December 29th, 2020.} In particular, they also provide a construction of a function $f$ witnessing $CC^{EPIC}(f) = \exp(CC(f))$ (their Section~3.1), which is similar to ours (our \autoref{sec:CCvsEpic}) in that it derives hardness from high-precision prices. The remainder of their paper is disjoint from ours (in particular, they do not study $CC^{DSIC}$, so there is no analogue to our \autoref{sec:DominantVsExPost}). Instead, they establish the following results: (a) There exist functions $f$ with $CC^{EPIC}(f) = \exp(CC(f))$ \emph{without} high-precision prices (but with a third bidder). (b) Under certain assumptions on $f$, $CC^{EPIC}(f) = \poly(n, CC(f))$ and/or $CC^{TIE}(f) = \poly(n,CC(f))$. (c) Reconstructing the \emph{menu} presented by an EPIC mechanism can be exponentially harder than computing the mechanism alone. A high-level distinction of our works is that our paper provides exponential separations between multiple solution concepts (algorithmic vs. EPIC vs. DSIC), whereas their paper provides a more thorough investigation of algorithmic vs. EPIC. 

\subsection{Summary and Roadmap}
We establish an exponential separation between $CC(f)$ and $CC^{EPIC}(f)$, and also $CC^{EPIC}(f)$ and $CC^{DSIC}(f)$, both the largest possible, and first of their kind. \autoref{sec:CCvsEpic} provides the separation between $CC(f)$ and $CC^{EPIC}(f)$, and \autoref{sec:DominantVsExPost} provides the separation between $CC^{EPIC}(f)$ and $CC^{DSIC}(f)$.

\section{Preliminaries}


We study \emph{implementations} of \emph{social choice functions} over
\emph{(social choice) environments}.
For completeness and accessibility for the reader not familiar with game
theory, we rigorously define all of these terms in~\autoref{app:FormalModel}.
Here, we briefly and intuitively describe the central definitions of the paper.

The environment specifies a set of outcomes $Y$ and
a set of types $\T_1,\ldots,\T_n$ for the $n$ different strategic agents.
Intuitively, the types represent the different possible options for ``who
each agent might be''.
When agent $i$ has type $t_i\in\T_i$, they
have utility $u_i(t_i, y)\in\R$ for each outcome $y\in Y$.
When we study environment with transfers\footnote{
  Throughout the paper, we make no assumptions on the transfers.
  That is, they can be positive or negative,
  and an agent can receive negative utility.
  This makes our impossibility results only stronger.
}, we assume utilities are
quasilinear (that is, if outcome $y$ is selected and agent $i$
receives transfer $p$, then agent $i$ gets utility $u_i(t_i, y)+p$).
The social choice function $f : \T_1\times\ldots\times\T_n\to Y$
specifies how the outcome depends on the type each agent has.
While the ``social planner'' designing the mechanism
wishes to compute $f$, the agents wish to maximize their own utility.
The social choice function itself is assumed to be implementable. That is, there exists transfer functions
$p_1,\ldots,p_n : \T_1\times\dots\times\T_n \to \R$ for each agent, such that
for all $i$, types $t_1,\ldots,t_n$, and $t_i'$, we have
\[ f(t_i, t_{-i}) + p_i(t_i, t_{-i})
  \ge f(t_i', t_{-i}) + p_i(t_i', t_{-i}).
\]
We say that transfers $(p_1,\ldots,p_n)$ \emph{incentivize} $f$,
and we say that $f$ is incentive compatible without transfers if each
$p_i(\cdot)$ above can be taken to be $0$.

A mechanism consists of an (extensive form) game $G$ which the $n$ agents
play, and ``type-strategies'' $S_1,\ldots,S_n$ which suggest how the agents
should play $G$.
Intuitively, the game $G$ iteratively solicits actions from players,
updating its state according to the action chosen,
and outputting some result after a finite amount of time.
This is represented by a game tree, where the nodes correspond to states of
the game. Each non-leaf node is labeled by some agent,
and the edges below that node are labeled with the actions that agent may
play at that state of the game.
The game is not perfect information: it may hide
information from agents or ask them to act simultaneously.
For each agent $i\in[n]$, the states of $G$ at which $i$ is called to act
are partitioned into ``information sets'' $I_i \in \I_i$,
where two nodes are in the same information set if and only if agent $i$
cannot distinguish between them while playing the game\footnote{
  We assume the game satisfies ``perfect recall'', that is, the game cannot
  force agents to forget information they knew in the past.
  For details on how information sets are defined, see
  \autoref{app:FormalModel}.
}.
For $i\in[n]$, the type-strategy $S_i$ maps types $t_i \in \T_i$ to
``behavioural strategies'' $s_i = S_i(t_i)$ which player $i$ can play in $G$.
A behavioural strategy (typically referred to simply as a strategy)
specifies the action that player $i$ will choose
any time they are called to act over the course of the game,
that is, it assigns an action to each information set of player $i$.
We denote the result output by $G$ when the agents play strategies
$s_1,\ldots,s_n$ by $G(s_1,\ldots,s_n)$.

A mechanism $G$ with strategies $S_1,\ldots, S_n$ \emph{computes (without transfers)} a social choice
function $f$ if $G(S_1(t_1),\ldots,S_n(t_n))\allowbreak = f(t_1,
\allowbreak \ldots,t_n)$.
A mechanism \emph{computes} $f$ \emph{(with transfers)} if the result of
the game additionally includes transfers $p_1,\ldots,p_n$ to each
player.
\notshow{\footnote{
  As is often the case in the literature (for example,
  in welfare-maximization context),
  we treat transfers as simply a way to incentivize agents
  to tell the truth. So our considerations do not fit cases where,
  for example, the mechanism must be ``budget balanced''
  or the social planner wishes to maximize profit.
  Additionally, in our complexity measures, we reason about the
  ``easiest to compute'' price function which incentivize
  the social choice function. Note that this makes our separations only
  stronger.
}.}

We consider two notions of incentive compatibility for interactive
mechanisms. In words, a mechanism is dominant strategy
incentive compatible (DSIC)
if, for \emph{any} (behavioral) strategy profile $s_{-i} := (s_j)_{j\ne i}$
of the other players, it is a best response for player $i$ to play $S_i(t_i)$.
That is, for all $t_i, s_{-i}, s_i'$, we have
\[ u_i( t_i, G( S_i(t_i), s_{-i} ) ) 
  \ge u_i( t_i, G( s_i', s_{-i} ) ),  
\]
where we recall that if the mechanism has transfers, $u_i(t_i,\cdot)$
is the quasilinear utility given by agent $i$'s value for the outcome when
their type is $t_i$, plus the transfer $p_i$ to player $i$.
On the other hand, a mechanism is ex-post Nash incentive compatible (EPIC)
if, for any profile of strategy $S_{-i}(t_{-i}) := ( S_j(t_j) )_{j\ne i}$ 
which are \emph{consistent with type-strategies $S_{-i}$},
it is a best response to play $S_i(t_i)$.
That is, for all $t_i, t_{-i}, s_i'$, we have
\[ u_i( t_i, G( S_i(t_i), S_{-i}(t_{-i} ) ) 
    \ge u_i( t_i, G( s_i', S_{-i}(t_{-i}) ) ). 
\]

\notshow{Thus, EPIC is a weaker solution concept that DSIC, because under EPIC
implementations, agents assumes that all other players to not play in a
``crazy'' way (that is, playing a strategy which is not
of the form $S_i(t_i)$ for some $t_i\in\T_i$).
We describe these strategies as crazy not just because
they are not inconsistent with $S_i(\cdot)$.
Rather, we can observe for any fixed strategies $s_{-i}$ of
the other players, there exists a best response of the form
$S_i(t_i')$ for \emph{some} $t_i'$.
Namely, if strategy $s_i$ is \emph{any} best response to $s_{-i}$,
then we can consider the leaf node which
$G(s_i, s_{-i})$ reaches, and take any type $t_i^*$
takes the same actions as $s_i$
on each path from the root to the leaf (without loss of generality,
there must be some type like this, or else the leaf would never be
reached when computing $G(S_1(t_1),\ldots,S_n(t_n))$).
Then $S_i(t_i^*)$ is also a best response to $s_{-i}$.
Thus, agents cannot hope to improve their utility by playing
``crazy'' strategies (they can only do so by lying about their type and
playing $S_i(t_i')$).
Thus, to justify the restriction to EPIC mechanisms, we need only assume that
all agents act so as to maximize their own utility\footnote{
  In particular, we assume that the utility from the social choice
  environment captures all that the agents care about,
  and the agents have no interaction before or after the mechanism is run.
  In particular, this rules out the possibility of ``interdependent
  values``, in which an agent's utility is a function of the types of the
  other agents.
}.
\ctnote{does the above make sense? and what about the next paragraph?}}

Observe quickly the following approach for an EPIC implementation of $f$: Say that $(p_1,\ldots, p_n)$ incentivizes $f$. Then one can run protocols separately to compute $f$, and also to compute each $p_i$, and then output all of these together. This is simply because the EPIC constraints assume that the other bidders' strategies are fixed by their type. So the overhead of $CC^{EPIC}(f)$ versus $CC(f)$ is exactly the overhead to compute transfers. This \emph{does not} hold for DSIC implementations. Indeed, this is because other bidders may use a bizarre (not utility-maximizing) strategy which changes their behavior in (e.g.) the protocol to compute $p_i$ as a function of your behavior in the protocol to compute $f$. But the EPIC condition does not require guarantees against such bizarre strategies, only the fixed strategies which guarantee each player a best response (assuming other players also use such a strategy). We formally define our complexity measures as follows:
\begin{definition}
  For an arbitrary social choice function $f$,
\begin{itemize}
  \item $CC(f)$ is the minimum communication cost of a mechanism
    (no incentives) computing $f$.
  \item If $f$ is implementable, $CC^{EPIC}(f)$ is the minimum value of
    $CC(f, p_1,\ldots,p_n)$ over any transfer functions
    $p_1,\ldots,p_n$ which incentivize $f$.
  \item If $f$ is implementable,
    $CC^{DSIC}(f, p_1,\ldots,p_n)$ is the minimum communication cost of
    any DSIC mechanism computing $(f, \allowbreak p_1, \allowbreak \ldots,p_n)$.
    Moreover, $CC^{DSIC}(f)$ is the minimum value of
    $CC^{DSIC}(f, p_1,\ldots,p_n)$ for any transfer functions
    $p_1,\ldots,p_n$ which incentivize $f$.
\end{itemize}
\end{definition}

\notshow{Observe that any communication protocol computing
$(f, p_1,\ldots,p_n)$, for some transfer function $(p_1,\ldots,p_n)$
incentivizing $f$, constitutes an EPIC implementation of $f$.
This is because the communication protocol itself constitutes a game,
and one can define $S_i(t_i)$ to be the strategy which follows the protocol
correctly for an agent with $S_i(t_i)$.
Because transfers $p_1,\ldots,p_n$ incentivize $f$, no agent
can benefit by deviating from their truth-telling strategy to some distinct
strategy $S_i(t_i')$ (and moreover, as discussed in the previous paragraph,
they also cannot improve their utility by deviating to 
$s_i'$, for some $s_i'$ not of the form $S_i(t)$).
This makes EPIC implementations especially appealing as objects of study,
because finding EPIC mechanisms can typically be decomposed into two
distinct steps: finding a good social choice function (and payments)
$(f, p_1,\ldots,p_n)$, and then finding a
communication-efficient protocol for $(f, p_1,\ldots,p_n)$.
}


\section{
  Exponential Separation of 
  \texorpdfstring{$CC(f)$}{CC(f)} and 
  \texorpdfstring{$CC^{EPIC}(f)$}{CC[EPIC](f)}
}
\label{sec:CCvsEpic}

In this section, we show that there exists an implementable
social choice function $f$
which has communication complexity $\BigO(\log n)$, yet any EPIC
implementation of $f$ must use $\Omega(n)$ communication. 

We now describe our construction at a high level.
Our instance has two players, Alice and Bob.
Alice's type can be represented succinctly, but Bob's
type is ``complicated''. 
Therefore, without regards to incentives,
this social choice function can be efficiently computed
in two rounds, with Alice sending her type to Bob in the
first round, and Bob deciding the outcome in the second round.
However, the social choice function and the utilities of Alice
are designed carefully such that there is essentially only one
possible transfer function that gives an EPIC implementation,
and moreover, this transfer function has to be
``as complicated as the types of Bob''.
This means that the communication required to EPIC
implement the social choice function is large.

\paragraph{Social choice environment.} Consider a 2-player social choice
environment and refer to the players 
as Alice and Bob. The space of outcomes of the environment is $[n+1]$. The
class of Bob's types is $\T_B=\{0,1\}^n$. That is, Bob's
type is a binary string $b$ of length $n$. 
We let $b_i\in\{0,1\}$ denote $b$'s $i$-th coordinate.
Bob's utility is always zero regardless of the
outcome (that is, $u_{\bob}(b,i)=0$ for all $i\in[n+1], b\in\T_B$).
The class of Alice's types is 
$\T_A=\bigcup_{i\in[n]}\{a_{i, \ell}, a_{i, \ell}', a_{i, h}, a_{i, h}'\}$,
where for each $i\in[n]$, the types
$a_{i, \ell}, a_{i, \ell}', a_{i, h}, a_{i, h}'$ have 
utility:
\begin{align*}
  & u_A(a_{i, \ell}, i) = 2^{-n} && u_A(a_{i, \ell}', i)= 0 
    && \\ & u_A(a_{i, \ell}, i+1) = 0 && u_A(a_{i, \ell}', i+1) = 2^{-n} \\
  & u_A(a_{i, h}, i) = 2^{-n} && u_A(a_{i, h}', i) = 0
    && \\ & u_A(a_{i, h}, i+1) = 2^{i} && u_A(a_{i, h}', i+1) = 2^{i}+2^{-n},
\end{align*}
and $u_A(a_{i, \ell}, j) = u_A(a_{i, \ell}', j) = u_A(a_{i, h}, j) =
u_A(a_{i, h}', j) = - \infty$ for all other outcomes $j\notin\{i,i+1\}$.
Intuitively, $a_{i,\ell}, a'_{i,\ell}$ are ``low types'' of Alice, and
$a_{i,h},a'_{i,h}$ are ``high types'' (which get much more utility from
outcome $i+1$).

\paragraph{Social choice function.}
The social choice function $f:\T_A\times\T_B\to [n+1]$ is given by
\begin{align*}
    & f(a_{i, \ell}, b) = i 
    && f(a_{i, \ell}', b) = i+1-b_i \\
  & f(a_{i, h}, b) = i+1-b_i
    &&  f(a_{i, h}', b) = i+1.
\end{align*}
That is, each of Alice's type among $a_{i, \ell}, a_{i, \ell}', a_{i, h},
a_{i, h}'$ receives either outcome $i$ or $i+1$, and the exact outcome chosen
depend on Bob's type $b$ in the following way:
If $b_i = 0$, then $a_{i, \ell}$ receives outcome $i$,
and each of $a_{i, \ell}', a_{i, h}, a_{i, h}'$ receives outcome $i+1$.
If $b_i = 1$, then each of $a_{i, \ell}, a_{i, \ell}', a_{i, h}$ receives outcome $i$,
and $a_{i, h}'$ receives outcome $i+1$.


\begin{theorem}\label{thm:epic}
  In the 2-player environment above,
  the social choice function $f$ is EPIC implementable.
  Moreover, there is an exponential separation between the communication
  complexity for computing $f$ and the communication complexity of any EPIC
  implementation of $f$, i.e.,
  \[ CC(f) = \BigO(\log n)
    \qquad \qquad
    CC^{EPIC}(f) = \Theta(n).
  \]
\end{theorem}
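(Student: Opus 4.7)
The upper bound $CC(f) = O(\log n)$ is immediate: Alice sends her type $t_A \in \T_A$ in $O(\log n)$ bits, and Bob --- who now knows both inputs --- announces $f(t_A, b) \in [n+1]$. For the EPIC side, the upper bound $CC^{EPIC}(f) = O(n)$ is also easy: Bob sends his entire type $b \in \{0,1\}^n$ to Alice in $n$ bits, after which both parties can compute $f$ and any feasible incentivizing transfers. EPIC-implementability of $f$ will follow from the fact that the constraint system analyzed for the lower bound is consistent (one can exhibit explicit feasible transfers).

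The interesting direction is the $\Omega(n)$ lower bound on $CC^{EPIC}(f)$. Fix any transfers $p_A$ that EPIC-incentivize $f$ and define $P_i(b) := p_A(a_{i,h}, b)$. The heart of the argument is the tight per-index estimate
\[
P_{i-1}(b) - P_i(b) \;=\; b_{i-1} \cdot 2^{i-1} \;\pm\; O(2^{-n})
\qquad\text{for every } i \in \{2,\ldots,n\},\; b \in \T_B,
\]
which I plan to prove in two steps. First, by one-step deviations among the four types $\{a_{i,\ell}, a_{i,\ell}', a_{i,h}, a_{i,h}'\}$ that share the index $i$, I will show that $p_A(a_{i,\ell}', b) = p_A(a_{i,h}, b) = P_i(b)$ for all $b$, and that $p_A(a_{i,h}', b)$ is pinned down to a window of width $O(2^{-n})$ around $P_i(b) - b_i \cdot 2^i$. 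Second, cross-index deviations give both sides of the estimate: the lower bound $P_{i-1}(b) - P_i(b) \ge 2^{i-1} - O(2^{-n})$ when $b_{i-1} = 1$ comes from $a_{i-1,h}$'s ability to extract value $2^{i-1}$ by pretending to be $a_{i,\ell}$; the matching upper bound (together with the analogous $\pm O(2^{-n})$ bound when $b_{i-1} = 0$) comes from the deviation $a_{i,\ell} \to a_{i-1,h}'$ (or $a_{i,\ell}' \to a_{i-1,h}'$ if $b_i = 0$), which transports the narrow window around $p_A(a_{i-1,h}', b)$ into a constraint on $P_{i-1}(b) - P_i(b)$.

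Telescoping the per-index estimate over $i = 2, \ldots, n$ gives $P_1(b) - P_n(b) = \sum_{j=1}^{n-1} b_j \cdot 2^j \pm O(n \cdot 2^{-n})$. The right-hand sum is an integer whose distinct values are at least $2$ apart, and the cumulative error is strictly less than $1$ for all sufficiently large $n$, so the map $b \mapsto P_1(b) - P_n(b)$ takes at least $2^{n-1}$ distinct values. This forces $|\{P_1(b) : b\}| \cdot |\{P_n(b) : b\}| \geq 2^{n-1}$, so $|\{P_i(b) : b\}| \geq 2^{(n-1)/2}$ for some $i \in \{1, n\}$. Since any EPIC protocol must use distinct transcripts on inputs $(a_{i,h}, b)$ and $(a_{i,h}, b')$ whenever $P_i(b) \neq P_i(b')$, the standard monochromatic rectangle lower bound yields $CC^{EPIC}(f) \geq (n-1)/2 = \Omega(n)$.

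The main obstacle will be orchestrating the deviations so that only $O(2^{-n})$ error accumulates per index, since the telescoped error over the $n-1$ indices must stay strictly below $1$ in absolute value. This is exactly what the construction's carefully chosen utilities $2^{-n}, 2^i, 2^i + 2^{-n}$ enforce: the $2^{-n}$ values provide just enough slack to separate the tight payment windows for the four types sharing each index, while the $2^{i-1}$ jumps drive the quantitative per-step constraint.
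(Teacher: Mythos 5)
Your proposal takes essentially the same route as the paper: establish a tight per-index estimate on adjacent transfer differences (the paper's \autoref{lem:keyDifference} in the form $p(i,b)-p(i+1,b)\in[b_i2^i-2^{-n},\,b_i2^i+2^{-n}]$), telescope it across indices, observe that the resulting quantity takes exponentially many pairwise-separated values, and conclude by a transcript-counting argument. The small cosmetic differences --- parameterizing transfers by $P_i(b)=p_A(a_{i,h},b)$ rather than by outcome $p(i,b)$, and closing with a pigeonhole on $|\{P_1(b)\}|\cdot|\{P_n(b)\}|$ rather than the paper's ``run the protocol twice at $a_{1,\ell}$ and $a_{n,h}'$'' reduction --- are harmless and yield the same $\Omega(n)$ bound.

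One imprecision worth flagging in your derivation of the per-index estimate: the explicit conclusions you list in your ``first step'' are vacuous when $b_i=0$. In that case $a_{i,h}$ and $a_{i,h}'$ both receive outcome $i+1$, so ``$p_A(a_{i,h}',b)$ near $P_i(b)-b_i2^i$'' is the tautology $p(i+1,b)\approx p(i+1,b)$, and you never pin down $p_A(a_{i,\ell},b)$, which is where the nontrivial $b_i=0$ constraint $|p(i,b)-p(i+1,b)|\le 2^{-n}$ actually lives. The cross-index deviations of your ``second step'' also do not fully repair this: e.g.\ $a_{i,\ell}'\to a_{i-1,h}'$ (when $b_{i-1}=0,b_i=0$) gives only $p(i,b)-p(i+1,b)\le 2^{-n}$, and the matching lower bound $p(i,b)-p(i+1,b)\ge -2^{-n}$ requires the within-index comparison $a_{i,\ell}$ versus $a_{i,\ell}'$, which your text skips. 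This is exactly the comparison the paper's Lemma uses for $b_i=0$; since it is one of the ``one-step deviations among the four types'' you advertise, the fix is only to spell it out, and the cross-index deviations then become unnecessary.
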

\begin{proof}
First, observe that Alice and Bob can compute $f$ with $\BigO(\log n)$
communication in the following way: Alice sends her valuation, which can
be described with $\BigO(\log n)$ bits, to Bob, and then, Bob computes and
outputs the outcome, which also costs $\BigO(\log n)$ bits.
Thus, $CC(f) = \BigO(\log n)$.

On the other hand, consider any EPIC implementation of $f$. 
Without loss of generality, we may assume that the transfers to Bob are always
$0$. Let $p(a,b)$ denote the transfer given to Alice when Alice has type
$a\in\T_A$ and Bob has type $b\in\T_B$.
By standard arguments, we must have $p(a',b)=p(a,b)$ for any $b\in\T_B$
and $a, a'\in \T_A$ such that $f(a',b)=f(a,b)$ 
(otherwise, one of $a$ or $a'$ would want to deviate to the other,
in order to get a higher transfer for the same outcome).
Thus, going forward we write the transfer function $p : [n+1]\times \T_B \to \R$,
where $p(i,b)$ is the transfer to Alice when Bob has type $b$
and outcome $i$ is the output of $f$.



Now we prove our main lemma, which allows us to characterize 
$p$ in any EPIC implementation of $f$.

\begin{lemma}
  \label{lem:keyDifference}
  Transfers $p$ incentivize $f$ if and only if we have 
  \[ p(i,b) - p(i+1,b) \in [b_i 2^i-2^{-n},\,b_i 2^i+2^{-n}] \tag{*}\label{eqn:keyDifference} \]
  for all $i\in[n]$ and $b\in\T_B$.
\end{lemma}
\begin{proof}
  When Alice has type
  $a_i \in \{ a_{i, \ell}, a_{i, \ell}', a_{i, h}, a_{i, h}' \}$,
  the social choice function $f$ will select outcome $i$ or $i+1$,
  based on the type of Alice and bit $b_i$ of Bob's valuation $b\in\T_B$.
  Certainly Alice will not want to deviate to an outcome $j\notin \{i,i+1\}$,
  as her utility for these outcomes is $-\infty$.
  Thus, to prove the ``if'' direction, 
  it suffices to show that for each $i\in[n]$ and $b\in\T_B$,
  when transfers satisfy (\ref{eqn:keyDifference})
  for this value of $i$ and $b$, if Alice has a type 
  $a_i \in \{ a_{i, \ell}, a_{i, \ell}', a_{i, h}, a_{i, h}' \}$, 
  she will not want to deviate to
  the unique outcome in $\{i,i+1\}\setminus \{f(a_i,b)\}$.
  To prove the ``only if'' direction, it suffices to show that if transfers $p$
  incentivize $f$, then (\ref{eqn:keyDifference}) must hold for each
  $i\in[n]$ and $b\in\T_B$.
  To this end, consider any $i\in [n]$.

  First, suppose $b_i=0$.
  This means that $a_{i,\ell}$ receives $i$,
  and $a_{i,\ell}',\allowbreak a_{i,h},a_{i,h}'$ receive $i+1$.

  Suppose that transfers $p$ satisfy (\ref{eqn:keyDifference}),
  i.e. $p(i,b)-p(i+1,b)\in[-2^{-n},\,2^{-n}]$.
  First, note that $a_{i,h}$ and $a_{i,h}'$ will not want to deviate to $i$,
  because these types have much higher utility for $i+1$ (and receive almost the same
  transfer on these two outcomes).
  Second, note that $u_A(a_{i, \ell}, i+1)-u_A(a_{i, \ell}, i)=-2^{-n}$ and $u_A(a_{i, \ell}', i+1)-u_A(a_{i, \ell}', i)=2^{-n}$, and it follows by $p(i,b)-p(i+1,b)\in[-2^{-n},\,2^{-n}]$ that
  \begin{align*}
      u_A(a_{i, \ell}, i)+p(i,b)\ge u_A(a_{i, \ell}, i+1)+p(i+1,b) \\
      u_A(a_{i, \ell}', i+1)+p(i+1,b)\ge u_A(a_{i, \ell}', i)+p(i,b)
  \end{align*}
  Thus, $a_{i,\ell}$ and $a_{i,\ell}'$ will not want to deviate either.

  Now we show that if transfers $p$ incentivize $f$, then they must satisfy
  (\ref{eqn:keyDifference}) for this value of $i$.
  Observe that $a_{i,\ell}$ and $a_{i,\ell}'$ have almost the same utility for $i$
  and $i+1$, yet receive different outcomes.
  This will force $p(i,b)-p(i+1,b)\in[-2^{-n},\,2^{-n}]$.
  Specifically, for neither of $a_{i,\ell}$ nor $a_{i,\ell}'$ to want to deviate
  to each other, we must have
  \begin{align*}
    & 2^{-n}+p(i,b) = u_A(a_{i, \ell}, i)+p(i,b)
      \\ & \qquad \ge u_A(a_{i, \ell}, i+1)+p(i+1,b) = p(i+1,b) \\
    & 2^{-n}+p(i+1,b) = u_A(a_{i, \ell}', i+1)+p(i+1,b)
      \\ & \qquad \ge u_A(a_{i, \ell}', i)+p(i,b) = p(i,b) ,
  \end{align*}
  and thus $p(i,b)-p(i+1,b)\in[-2^{-n},\,2^{-n}]$.

  Second, suppose $b_i=1$.
  This means that $a_{i,\ell},a_{i,\ell}',a_{i,h}$ receive $i$,
  and $a_{i,h}'$ receives $i+1$.
  The logic in this case is analogous to the first case.

  Suppose that transfers $p$ satisfy (\ref{eqn:keyDifference}),
  i.e. $p(i,b)-p(i+1,b)\in[2^i-2^{-n},\,2^i+2^{-n}]$.
  First, note that $a_{i,\ell}$ and $a_{i,\ell}'$ will not want to deviate to $i+1$,
  because these types have almost the same utility for $i$ and $i+1$ 
  (and receive a much higher transfer on $i$).
  Second, note that $u_A(a_{i, h}, i+1)-u_A(a_{i, h}, i)=2^i-2^{-n}$ and $u_A(a_{i, h}', i+1)-u_A(a_{i, h}', i)=2^i+2^{-n}$, and it follows by $p(i,b)-p(i+1,b)\in[2^i-2^{-n},\,2^i+2^{-n}]$ that
  \begin{align*}
      u_A(a_{i, h}, i)+p(i,b)\ge u_A(a_{i, h}, i+1)+p(i+1,b) \\
      u_A(a_{i, h}', i+1)+p(i+1,b)\ge u_A(a_{i, h}', i)+p(i,b)
  \end{align*}
  Thus, $a_{i,h}$ and $a_{i,h}'$ will not want to deviate either.

  Now we show that if transfers $p$ incentivize $f$, then they must satisfy
  (\ref{eqn:keyDifference}).
  Observe that $a_{i,h}$ and $a_{i,h}'$ have almost the same utilities for $i$ and $i+1$,
  yet receive different outcomes.
  This will force $p(i,b) - p(i+1,b)\in[2^i-2^{-n},\,2^i+2^{-n}]$.
  Specifically, for neither of $a_{i,h}$ nor $a_{i,h}'$ to want to deviate
  to each other, we must have
  \begin{align*}
    & 2^{-n}+p(i,b) = u_A(a_{i, h}, i)+p(i,b) \\
    & \qquad  \ge u_A(a_{i, h}, i+1)+p(i+1,b) = 2^i + p(i+1,b) \\
    & 2^i+2^{-n} + p(i+1,b) = u_A(a_{i, h}', i+1)+p(i+1,b) \\
    &  \qquad \ge u_A(a_{i, h}', i)+p(i,b) = p(i,b) ,
  \end{align*}
  and thus $p(i,b) - p(i+1,b)\in[2^i-2^{-n},\,2^i+2^{-n}]$.

\end{proof}

We now define transfers $p^*$ such that
\[  p^*(i,b) = - \sum_{j=1}^{i-1} b_j 2^j. \]
For each $b\in\T_B$ and $i\in[n]$, we have $p^*(i,b)-p^*(i+1,b)\in[b_i2^i-2^{-n},\,b_i2^i+2^{-n}]$,
and thus by \autoref{lem:keyDifference}, these transfers incentivize $f$.
Thus, let $\M$ denote the mechanism which has Alice announce her type
(using $\BigO(\log(n))$ bits),
tells that type to Bob, and then has Bob decide the outcome $i$
(using $\BigO(\log(n))$ bits) and the transfer $p^*(i,b)$ for Alice
(using $\BigO(n)$ bits).
This mechanism EPIC implements $f$ with communication cost $\BigO(n)$.


On the other hand, consider any mechanism $\M$ which EPIC implements $f$.
Let $p$ denote the transfers $\M$ gives to Alice.
By \autoref{lem:keyDifference} and telescoping sum, 
the transfers must satisfy $p(1,b)-p(n+1,b)\in[\sum_{j=1}^n (b_j2^j-2^{-n}),\,\sum_{j=1}^n (b_j2^j+2^{-n})]$ for
all $b\in\T_B$. Notice that for sufficiently large $n$, $n2^{-n}$ is tiny, and hence, the intervals $[\sum_{j=1}^n (b_j2^j-2^{-n}),\,\sum_{j=1}^n (b_j2^j+2^{-n})]$ corresponding to distinct $b$'s are disjoint.
Since there are $2^n$ distinct $b$'s, there are also $2^n$
distinct values of $p(1,b)-p(n+1,b)$.
Suppose for contradiction that $\M$ computes $p$ using ${o}(n)$ bits of
communication.
Then there also exists a protocol which can compute $p(1,b)-p(n+1,b)$ 
with ${o}(n)$ communication, which is impossible because there are $2^n$ such
values.
Therefore, any EPIC implementation of $f$ must have
communication cost $\Omega(n)$.
This completes the proof.

\end{proof}

\paragraph{Discussion.}

In the proof above, we showed that computing the
transfers requires large amount of communication because the transfers require
a large number of bits to represent. For two players, this is necessary.
That is, in a two player environment, if a social choice function $f$
can be incentivized with transfers that can be represented with $K$ bits,
then there exists an EPIC implementation with communication cost $CC(f)+K$.
This implementation first has
Alice and Bob compute the social choice function using an optimal protocol,
which requires $CC(f)$ bits, and then has each player
specify the transfer for the other player (as we recalled in the proof of
\autoref{thm:epic}, the transfers to Alice are determined solely by the outcome 
and Bob's type and vice versa).

We note that it is possible to modify the environment by giving Bob
nontrivial utilities such that $f$ is the unique social choice function
which maximizes the welfare $u_A(a,i)+u_B(b,i)$.
Specifically, for each Bob
type $b\in\T_B$, we define Bob's utility as
$u_{B}(b,i)=-\sum_{j=1}^{i-1}b_j2^j$ for each outcome $i\in[n+1]$, which is equal to $p^*(i,b)$ in the proof. 
In this modified environment, $f$ always returns the unique outcome 
which maximizes welfare. 
Notice that $p^*(i,b)$ then becomes the VCG transfer (up to an additive
constant that can depend Bob's type) for Alice. If we also let Alice output
the VCG transfer (up to an additive constant that can depend on Alice's
type) $p'(i,a):=u_A(a,i)$ for Bob after the outcome is decided, then
$p'$ along with $f$ is EPIC for Bob.  Together, $p^*,p'$ give an EPIC
implementation of $f$. 

Finally, in the above modified environment where $f$ is welfare-maximizing,
note that despite Alice's valuation being succinctly representable,
her utilities are ``high precision''. 
This is necessary, because by~\cite[Proposition 2]{FadelS09}, 
if all the valuations in the environment have low precision, every
welfare-maximizing social choice function has an EPIC implementation with only
slightly more communication for computing the transfers. Moreover, Bob's type
requires many bits to represent. This is also necessary,
because if both players have succinct types, they can simultaneously output their types,
after which the mechanism computes the correct outcome and charges VCG transfers.



\section{
  Exponential Separation of 
  \texorpdfstring{$CC^{EPIC}(f)$}{CC[EPIC](f)} and 
  \texorpdfstring{$CC^{DSIC}(f)$}{CC[DSIC](f)}
}
\label{sec:DominantVsExPost}

In this section, we construct a social choice function $f$ such that
$CC^{EPIC}(f) = \BigO(n)$, yet $CC^{DSIC}(f) = \exp(n)$.

\subsection{Building Up to Our Construction}

We walk through a list of examples of environments and social choice rules,
trying to build to an exponential separation of the communication required to
EPIC implement and DSIC implement the rules.
The first example is a classical illustration of the difference between ex-post
and dominant strategy implementations for extensive form games.

\subsubsection{Attempt One}
\label{sec:attempt1}


Consider a second price auction with two bidders, Alice and Bob, and a
single item, such that Alice's and Bob's value for the item are integers in $\{1,2,\ldots,10\}$.
If the auction is implemented as a direct revelation mechanism, then it is DSIC.
However, suppose we first ask Alice her value, then tell that value to Bob
and ask him to respond with his own value. This mechanism is no longer DSIC.
For example, one strategy of Bob is to always say his
value is $1$, except when Alice bids $8$, in which case he will say his
value is $9$. When Bob plays this strategy and Alice's true value is $8$,
Alice gets more utility by lying and bidding $9$ than by telling the truth.

We note that the above strategy for Bob is ``crazy'' in the sense that it
does not maximize his own utility, but serves mostly to
incentivize non-truthful bidding by Alice. Moreover, this crazy strategy
for Bob was possible only because Bob knew Alice's value and decided his
response as a function of this value. Observe that, for such a crazy
strategy to work, Bob does not have to know Alice's value exactly.
Intuitively and informally, the following two conditions suffice:
\begin{enumerate}[(a)]
\item \label{item:crazy1} Bob learns information about Alice's type.
\item \label{item:crazy2} Bob has two possible responses,
  one which gives Alice high utility, and one which give Alice low utility.
\end{enumerate}
 
Our next idea is to construct an instance where any low communication
mechanism must satisfy \autoref{item:crazy1} and \autoref{item:crazy2}
above. We first focus on \autoref{item:crazy1} and try to devise an
instance where any low-communication mechanism requires Bob to know
something about Alice's valuation. For this, we embed the
well-known ``Index'' problem from communication complexity in a
welfare-maximization context. Recall that, in the Index problem, there is
a parameter $K > 0$ such that Alice has an index $k \in [K]$ and Bob has a
vector $X = (x_i)_{i \in [K]} \in \{0,1\}^K$, and the goal is to output
the $k^{\text{th}}$ location in the vector $X$, {i.e.} $x_k$.

Intuitively, the importance of the Index problem lies in the fact the
only way to efficiently solve this problem is for Alice to reveal a lot of
information about her input.
Specifically, first observe that
the protocol where Alice sends $k$ to Bob, and Bob then simply
outputs $x_k$, uses communication $\BigO(\log K)$.
However, it turns out that any protocol that does not
reveal a lot of information about Alice's input
to Bob must have communication $\Omega(K)$ 
(this can be formalized, see \cite[{\em etc.}]{KushilevitzN97},
although we do not need to do so here).


\subsubsection{Attempt Two}
  \label{sec:attempt2}

Consider an auction where there are two bidders and an even number $m$ of
items for sale. The bidders, Alice and Bob, are
multi-minded\footnote{
  Recall that a valuation function $v$ on $[m]$ is multi-minded if there
  exists a collection $\{(v_i, T_i)\}_i$, where each $v_i\in\R$
  and $T_i\subseteq [m]$, such that
  $v(S) = \max \{ v_i | T_i \subseteq S \}$.
  The sets $T_i$ are call the ``interests'' of the valuation function $v$.
} with interests as follows: Alice is interested in exactly two sets,
a set $S \subseteq [m]$ of size $m/2$ that she values at $4$, and the set
$\overline{S}$ that she values at $1$. Bob's valuation is such that for
every subset $T \subseteq [m]$ of size $m/2$, he is interested in exactly
one of the sets $T$ and $\overline{T}$, which he values at $5$
(and he values the other set at $0$).
The social
choice function $f$ outputs the welfare-maximizing allocation of items
between Alice and Bob. That is, Bob gets
whichever of $S$ or $\overline{S}$ he values at $5$, and Alice gets the
complement (which she values at either $4$ or $1$).
Observe that $f$ is incentive compatible without transfers.


The direct revelation mechanism $\M_1$ 
(where Alice and Bob simultaneously reveal their entire type) is DSIC.
In this mechanism, Bob does not learn anything about Alice's type,
that is, \autoref{item:crazy1} in \autoref{sec:attempt1} does not hold. 
However, the fact that Bob communicates his entire type means that 
$\M_1$ requires communication exponential in $m$.


There is also a mechanism $\M_2$ for the above instance where
\autoref{item:crazy1} is satisfied. This is the mechanism that first asks
Alice for the set $S$ of size $m/2$ she values at $4$, and then asks Bob
which of the sets $S$ and $\overline{S}$ he values at $5$. The mechanism
$\M_2$ then gives Bob the set he said he values at $5$ and gives Alice the
complement. Observe that $\M_2$ is EPIC and requires $\mathcal{O}(m)$
communication. However, the mechanism $\M_2$ is not DSIC. Indeed, consider
a (crazy) strategy for Bob where he always says that the set $S$ reported
by Alice is the one he values at $5$ (regardless of his input). With this
strategy for Bob, Alice always gets the complement of what she reports and
therefore, she is incentivized to lie and report the set $\overline{S}$
instead of the set $S$ which is truly her favorite.



\paragraph{A low communication DSIC mechanism.}

However, the instance above does not yield a separation between the
communication complexity of DSIC and EPIC implementations, as there is an
$\mathcal{O}(m)$-communication mechanism that is also DSIC.
This mechanism, which we we call $\M^{\star}$,
asks Alice only report the sets $\{S, \overline{S}\}$
of size $m/2$ she has non-zero value for, without
specifying which one of the two she values at $4$. Then, the mechanism
$\M^{\star}$ asks Bob which of the sets $S$ and $\overline{S}$ he values at
$5$, gives him that set and gives Alice the complement of the set.

The mechanism $\M^{\star}$ clearly has communication $\mathcal{O}(m)$.
It is DSIC, because if Alice reports anything other than than 
$\{S, \overline{S}\}$, she will get utility $0$ regardless of what Bob says.
In particular, it is not possible
to construct a ``crazy'' strategy of Bob as in $\M_2$,
because Bob's response cannot depend on the difference between $S$
and $\overline{S}$.

In other words, the reason the mechanism $\M^{\star}$ is DSIC is that it
does not satisfy \autoref{item:crazy2} above. 
Even though Bob 
learns a lot of information about Alice's type,
he cannot respond to this information in a way that gives Alice a lower
utility in some cases, and a higher utility in other cases.


\paragraph{Need for new ideas.}
It may seem at first that the mechanism $\M^{\star}$ works only because in
our instance, Bob does not need to which of $S$ and $\overline{S}$ does
Alice value at $4$ in order to determine the welfare-maximizing allocation.
However, this is not the case. 
Even if the welfare-maximizing allocation was
dependent on which of $S$ and $\overline{S}$ is valued at $4$ by Alice, Bob
could just send two answers, one for the case when $S$ is valued at $4$ and
the other one for the when $\overline{S}$ is valued at $4$.
The resulting mechanism would still be DSIC.
Thus, new ideas are needed to get a
separation between the communication complexity of EPIC and DSIC
implementations.

\subsection{Construction and Intuition}
\label{sec:construction}

At a high level, our main construction is simply two
independent copies of the instance described in \autoref{sec:attempt2},
where the valuation
functions for Alice and Bob are additive over the two copies.

Formally, for every even $m$, we have a two player combinatorial auction
where a set $M_1 \sqcup M_2$ of items satisfying $\card*{M_1} = \card*{M_2}
= m$ is for sale. 
The set of outcomes is defined by\footnote{
  We restrict the auction to always award half of the items in $M_i$ to each
  bidder, for each $i\in[2]$.
  This restriction is without loss of generality,
  because the social choice function $f$
  always outputs allocations with this property,
  but it simplifies the notation slightly.
}
\[
  Y = \{ (X_1, X_2) \mid X_1\subseteq M_1,
  \ X_2\subseteq M_2,\ \card*{X_1} = \card*{X_2} = m/2\} .
\]
An outcome $(X_1, X_2)$ indicates that Alice receives $(X_1, X_2)$
and Bob receives $(\overline{X_1}, \overline{X_2})$.
Alice's types are also given by the set $\T_{\alice} = Y$ and her utility
function $u_{\alice} : \T_{\alice} \times Y \to \mathbb{R}$ is defined by
$u_{\alice}((S_1, S_2),\allowbreak (X_1, X_2))\allowbreak = u_{\alice, 1}(S_1, X_1) 
+ u_{\alice, 2}(S_2, X_2)$, where, for $i \in [2]$, we have:
\[
 u_{\alice, i}(S_i, X_i) = \begin{cases}
 4, &\text{~if~} S_i = X_i \\ 
 1, &\text{~if~} S_i = \overline{X_i} \\
 0, &\text{~otherwise}.
 \end{cases}
\]
Bob's type set $\T_{\bob}$ is the collection of all pairs
$(v_{\bob, 1}, v_{\bob, 2})$, where for $i \in [2]$, the function
$v_{\bob, i}$ maps a subset of $M_i$ of size $m/2$ to the set $\{0,5\}$
such that for each set $T \subseteq M_i$, $\card*{T} = m/2$, we have
$v_{\bob, i}(T) = 5$ and $v_{\bob, i}(\overline{T}) = 0$ or vice-versa.
Bob's utility function is:
\[
u_{\bob}((v_{\bob, 1}, v_{\bob, 2}), (X_1, X_2))
  = v_{\bob, 1}(\overline{X_1}) + v_{\bob, 2}(\overline{X_2}) .
\] 

Finally, the goal of the auctioneer is to maximize the welfare. Observe
that, if Alice's type is $(S_1, S_2) \in \T_{\alice}$ 
and Bob's type is $(v_{\bob, 1}, v_{\bob, 2}) \in \T_{\bob}$,
this corresponds to computing the outcome $(X_1,\allowbreak X_2)$, where, for $i \in
[2]$, $X_i = S_i$ if $v_{\bob, i}(\overline{S_i}) = 5$ and $\overline{S_i}$
otherwise.
For the rest of this section, let $f$ denote this social choice function.

\paragraph{High Level Intuition.}

We use the instance above to separate the communication complexity of EPIC
and DSIC implementations.
First, we consider the mechanism $\Mpar$ which runs
two instances of the mechanism $\M^{\star}$ from
\autoref{sec:attempt2} in parallel.
More formally, in the first round we ask Alice to report
$\{S_1, \overline{S_1}\}$ and $\{S_2, \overline{S_2}\}$, without
differentiating between sets up to complements.
Bob then picks the allocation on both sets of items in round two.
This mechanism EPIC implements $f$ with communication cost
$\BigO(m)$.
However, as we show next, $\Mpar$ fails to be DSIC.



Observe the following crucial detail of the social choice
environment: when Alice's true type is $(S_1,S_2)$,
Alice has utility $4$ when she receives $(S_1, T_2)$ for any
$T_2 \notin \{S_2, \overline{S_2} \}$, but she has utility $2$ when
she receives $(\overline{S_1}, \overline{S_2})$.
This motivates us to construct the following strategy $s_B$ of Bob in
$\Mpar$: for some fixed sets $S_1^*\subseteq M_1, S_2^* \subseteq M_2$,
if Alice reports
$\{S_1^*,\overline{S_1^*}\}$ and $\{S_2^*,\overline{S_2^*}\}$ in round one,
then Bob will give Alice $(\overline{S_1^*}, \overline{S_2^*})$.
But if Alice reports $\{S_1^*,\overline{S_1^*}\}$ and
$\{T_2,\overline{T_2}\}$ in round one, for any
$T_2 \notin \{S_2^*, \overline{S_2^*}\}$, then Bob will give Alice
$(S_1^*, T_2')$ (for $T_2' \in \{T_2, \overline{T_2} \}$
chosen arbitrarily).
When Alice's true type is $(S_1^*, S_2^*)$,
truth telling is not a best response of Alice against this strategy $s_B$.
Thus, $\Mpar$ is not DSIC.

We now argue informally that the existence of a ``crazy'' strategy like
this for Bob is not an accident, but a property which is necessary in any
communication efficient mechanism.
Intuitively, this is because for the mechanism to be efficient, Alice must
reveal a lot of information about both sets of items
(implementing \autoref{item:crazy1} from \autoref{sec:attempt1}).
Regardless of the order in which this is done, at the first point Bob
learns about Alice's type on one set of items $M_i$, he
can condition his response on the other set of items $M_{3-i}$ based on the
information from $M_i$ 
This allows him to give Alice two sets she values at $1$ when she tells the
truth, yet at least one set which she values at $4$ when she deviates
(implementing \autoref{item:crazy2}).

For a concrete example, we can also consider
$\Mseq$, which denotes the mechanism which runs $\M^{\star}$ on the
first set of items $M_1$, commits to the allocation on $M_1$, then runs
$\M^{\star}$ on the second set of items $M_2$.
Then the same argument as for $\Mpar$ shows that there is a
strategy of Bob against which truth telling is not a best response.
However, we now need to change the argument so that Bob
conditions his response on $M_2$ on Alice's actions on $M_1$,
because Bob commits to a result on $M_1$ before he acts on $M_2$.
Because Alice must reveal lots of information about her type on
\emph{both} $M_1$ and $M_2$, 
this argument should go through in any communication efficient mechanism.

\subsection{Technical Considerations and Difficulties}
\label{sec:TechnicalDifficulties}

In \autoref{sec:construction}, we argued informally that at the earliest
where Alice reveals information, it should be possible to construct a strategy of
Bob against which truth telling is not a best response for Alice.
Unfortunately, this is not literally true for every mechanism,
and our proof must circumvent this fact.
In this section, we first explain in more detail how such ``crazy''
strategies are constructed, and demonstrate that the needed ``crazy''
strategy cannot necessarily be constructed at the first node where Alice
acts.

Consider a communication efficient mechanism $\M = (G, S_A, S_B)$,
and for simplicity assume that $\M$ is perfect information\footnote{
  We prove in
  \autoref{lem:perfectInformation} 
  that this assumption is
  without loss of generality for our specific social choice 
  function $f$.
}.
This assumption allows us to not worry about situations where the mechanism
asks Alice for information, but does not reveal all of that information to
Bob.

Our goal is to construct a ``crazy strategy'' of Bob, against which
truth-telling is not a best response for Alice.
To construct this strategy, we want to find a node $h$ in the game tree of
$\M$ where
Alice communicates information which Bob can respond to in the following way:
when Alice tells the truth, Bob must be able to give Alice a bad result,
but if Alice deviates from truth telling,
Bob can give Alice a good result 
on at least one of the sets of items.
To explain this fully, we use the language of \autoref{sec:TreeTypes}.
Specifically, we use $\T_A(h),\T_B(h)$ to denote
the types of Alice and Bob for which the computation of $G$ under
truth-telling passes through $h$.
We need $h$ to satisfy the following:
\begin{enumerate}[(A)]
  \item \label{item:specificCrazy1}
    Alice acts at $h$, and there exist two of Alice's types
    $(S_1, S_2),\allowbreak (T_1, T_2)\allowbreak \in \T_A(h)$ at $h$ 
    such that $S_A((S_1, S_2))(h)\ne S_A((T_1, T_2))(h)$
    (that is, $(S_1, S_2)$ and $(T_1, T_2)$ take different actions at $h$
    under truth telling),
    and moreover, we either have $S_1 = T_1$ or $S_2 = T_2$.
    For concreteness, suppose that $S_1 = T_1$.
  \item \label{item:specificCrazy2}
    There exist types $(v_{B,1}, v_{B,2}), (v_{B,1}', v_{B,2}')
    \in \T_B(h)$ such that
    $v_{B,1}(S_1)=5, v_{B,1}(S_2)=5$, and $v_{B,1}'(S_1)=0$.
\end{enumerate}
These correspond to \autoref{item:crazy1} and
\autoref{item:crazy2} of \autoref{sec:attempt1},
instantiated for the specific social choice function $f$.

\begin{claim}
\label{claim:CrazyStratExists}
  If there exists a node $h$ at which \autoref{item:specificCrazy1}
  and \autoref{item:specificCrazy2} are both satisfied,
  then $\M$ is not DSIC.
\end{claim}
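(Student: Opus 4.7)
The plan is to construct, at the node $h$, an explicit ``crazy'' strategy $s_B^*$ for Bob such that, when Alice's true type is $(S_1, S_2)$, truth-telling $S_A((S_1, S_2))$ is strictly dominated by the alternative strategy $S_A((T_1, T_2))$. The strategy $s_B^*$ will branch on Alice's action at $h$: at every Bob-node $h''$ whose root-to-$h''$ path traverses $h$ via the action $a' := S_A((T_1, T_2))(h)$, Bob plays $S_B((v_{B,1}', v_{B,2}'))(h'')$; at every other Bob-node (in particular, those before $h$, and those after $h$ via $a^* := S_A((S_1, S_2))(h)$), Bob plays $S_B((v_{B,1}, v_{B,2}))(h'')$. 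This is a well-defined behavioral strategy because under the assumed perfect information each information set is a singleton node, and $S_B(\cdot)(h'')$ is defined at every such node.

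The key step is comparing the two outcomes. When Alice is truthful against $s_B^*$, the combined play reaches $h$ (using $(S_1, S_2) \in \T_A(h)$ and $(v_{B,1}, v_{B,2}) \in \T_B(h)$); Alice then plays $a^*$, so $s_B^*$ agrees with $S_B((v_{B,1}, v_{B,2}))$ throughout, and the resulting outcome equals $f((S_1, S_2), (v_{B,1}, v_{B,2}))$. The conditions of \autoref{item:specificCrazy2} force Alice to receive $(\overline{S_1}, \overline{S_2})$, for a total utility of $1 + 1 = 2$. When Alice deviates to $S_A((T_1, T_2))$, the play still reaches $h$ (since $(T_1, T_2) \in \T_A(h)$), Alice plays $a'$, and from $h$ onward $s_B^*$ agrees with $S_B((v_{B,1}', v_{B,2}'))$. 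Crucially, $(v_{B,1}, v_{B,2})$ and $(v_{B,1}', v_{B,2}')$ both lie in $\T_B(h)$ and hence take identical actions at every Bob-node before $h$, so the entire play coincides with $G(S_A((T_1, T_2)), S_B((v_{B,1}', v_{B,2}')))$, which outputs $f((T_1, T_2), (v_{B,1}', v_{B,2}'))$. Because $v_{B,1}'(S_1) = 0$ (so $v_{B,1}'(\overline{S_1}) = 5$) and $T_1 = S_1$, Bob is awarded $\overline{S_1}$ on $M_1$, so Alice receives $S_1$ and collects utility $4$ there against her true type; her utility on $M_2$ is nonnegative, giving a total deviation utility $\ge 4 > 2$ and thereby violating DSIC.

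The main technical subtlety is verifying that the ``switch'' of pretended Bob-types at $h$ does not inject inconsistency into the game play. This is resolved exactly by the definition of $\T_B(h)$ together with perfect information: both pretended types execute identical actions at every Bob-node on the unique path to $h$, so Bob's pre-$h$ behavior is unambiguous, and once the path enters the subtree rooted at the $a'$-child of $h$, $s_B^*$ can freely adopt the post-$h$ portion of $S_B((v_{B,1}', v_{B,2}'))$. Note that we do not need $s_B^*$ to equal $S_B(t_B)$ for any single Bob-type; this is precisely what makes $s_B^*$ a ``crazy'' deviation against which DSIC must defend but EPIC need not.
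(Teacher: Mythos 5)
Your proof is correct and follows the same approach as the paper: construct Bob's crazy strategy by branching on Alice's action at $h$ (using $(v_{B,1}', v_{B,2}')$'s continuation in the $(T_1,T_2)$-subtree, $(v_{B,1},v_{B,2})$'s actions everywhere else), then compare Alice's truthful payoff of $2$ against her deviation payoff of at least $4$. Your explicit verification that the two Bob types act identically on the root-to-$h$ path (via $\T_B(h)$ and perfect information) is left implicit in the paper but is the right justification.
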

\begin{proof}
Define a ``crazy strategy'' of Bob as follows:
Bob acts according to $(v_B^1, v_B^2)$ in all nodes except those in
the subtree where Alice plays
the action chosen by $(T_1, T_2)$ at $h$, where Bob acts according to
$({v_B^1}', {v_B^2}')$. Suppose Alice's true type
is $(S_1, S_2)$. When Bob plays the above strategy and Alice tells the
truth, Alice receives $(\overline{S_1}, \overline{S_2})$,
which she values at $2$.
But if Alice deviates and plays strategy corresponding to $(T_1, T_2)$,
then she receives $S_1$ on $M_1$, and receives a utility of $4$.
Thus, truth-telling is not a best response for Alice with type $(S_1, S_2)$,
and $\M$ is not DSIC.
\end{proof}

Neither of the above conditions \autoref{item:specificCrazy1}
or \autoref{item:specificCrazy2}
on node $h$ are very strong independently.
For example, at any node $h$ which is the first time Alice takes a
nontrivial action, \autoref{item:specificCrazy1}
will be satisfied for some set $(S_1, S_2)$.
Furthermore, \autoref{item:specificCrazy2}
will be satisfied at the root node of the game tree
for \emph{every} Alice type $(S_1,S_2)$.
However, together these two requirements become somewhat subtle.
Before we proceed to the formal proof, we highlight two cases of this
subtlety, and briefly hint at how we address them.
\begin{enumerate}[(i)]
  \item \label{item:SmallSubtreeChallenge}
    Suppose the first thing the mechanism does is ask Bob ``is your type
    $(v_{B,1}^*, v_{B,2}^*)$?'' 
    (for some $(v_{B,1}^*, v_{B,2}^*)$ fixed by the mechanism).
    If the answer is yes, then all types of Alice have a dominant strategy
    in the corresponding subtree.
    Moreover, if the first question is to just ask Bob ``is your full type
    on $M_1$ equal to $v_{B,1}^*$?'' (for some fixed $v_{B,1}^*$,
    regardless of his type on $M_2$) then it is possible
    that Alice always has 
    a dominant strategy in that subtree\footnote{
      Observe that Alice already knows what will happen on $M_1$.
      Thus, in this subtree the mechanism can thus run
      the DSIC mechanism $\M^{\star}$ described in \autoref{sec:attempt1}
      on $M_2$.
      Then, as a final step the mechanism can ask Alice her type on $M_1$.
      Intuitively, Alice already knows what will happen on $M_1$
      (and can always grantee her best attainable outcome on $M_1$ at the end),
      so she might as well try to get her full value on $M_2$.
    }.
    This shows that we cannot hope to construct the needed ``crazy
    strategy'' of Bob in every subtree of the game.
  \item \label{item:SmallQuestionChallenge}
    Suppose the first question is to ask Bob ``what is your value on
    sets $\{T_1^*, \overline{T_1^*}\} \subseteq M_1$
    and sets $\{T_2^*, \overline{T_2^*}\} \subseteq M_2$ (for some
    set $T_1^*, T_2^*$ fixed by the mechanism).
    At the (four nodes of the) next layer of the tree, ask Alice
    ``Do you have $S_1 \in \{T_1^*, \overline{T_1^*} \}$
    AND $S_2 \in \{T_2^*, \overline{T_2^*} \}$?''
    It turns out that truth-telling is a dominant action at every node in
    the first layer where Alice acts\footnote{
      Formally, truth-telling is a dominant action at node $h$ if
      $S_A(t_A)$ gets utility at least as high as all strategies $s_A'$
      such that $s_A'(h)\ne S_A(t_A)(h)$.

      Clearly Alice has a dominant strategy if indeed she should
      answer ``yes'' in this layer.
      If not, either one or both of her sets are not in the
      specified pair. If both are not, she gets zero utility from lying.
      If one of her sets is in the specified pair,
      the outcome on the matching set of items is already fixed, so Alice
      might as well ``continue'' (answering ``no'') and hope for more utility
      on the other set of items, knowing she can always grantee her utility on
      the matching set of items.
    }.
    This shows that we cannot hope to construct the needed ``crazy
    strategy'' of Bob at every layer of the game tree.
\end{enumerate}

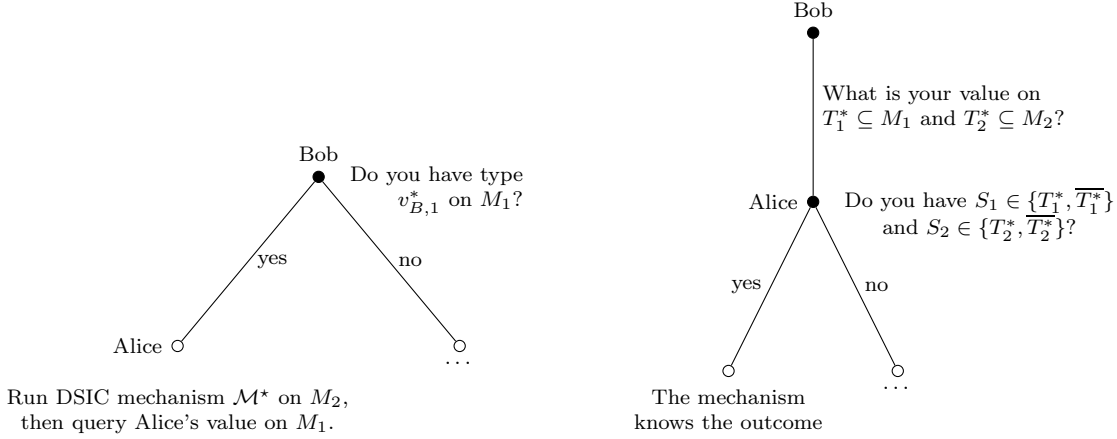
\begin{figure}
  \tikzset{
    solid node/.style={circle,draw,inner sep=1.5,fill=black},
    hollow node/.style={circle,draw,inner sep=1.5}
  }
  \begin{tikzpicture}[scale=1.5,font=\footnotesize]
    \tikzstyle{level 1}=[level distance=15mm,sibling distance=25mm]
    \tikzstyle{level 2}=[level distance=15mm,sibling distance=15mm]
    \tikzstyle{level 3}=[level distance=15mm,sibling distance=7mm]
  \node(0)[solid node,label=above:{Bob},
      label=right:{\begin{tabular}{r}
          \\
          Do you have type \\
          $v_{B,1}^*$ on $M_1$?
        \end{tabular} } 
      ]{}
    child{node(1)[align=left, hollow node, label=left:{Alice},
      label=below:{\begin{tabular}{c}
          \\
          Run DSIC mechanism $\M^{\star}$ on $M_2$, \\
          then query Alice's value on $M_1$.
        \end{tabular} } 
      ]{}
      edge from parent node[align=left, right,xshift=0]{yes}
    }
    child{node(2)[align=left, hollow node, label=below:{\ldots}]{}
      edge from parent node[align=left, right,xshift=0]{no};
    } ;
  \end{tikzpicture}
  \qquad
  \begin{tikzpicture}[scale=1.5,font=\footnotesize]
    \tikzstyle{level 1}=[level distance=15mm,sibling distance=35mm]
    \tikzstyle{level 2}=[level distance=15mm,sibling distance=15mm]
    \tikzstyle{level 3}=[level distance=15mm,sibling distance=7mm]
  \node(0)[solid node,label=above:{Bob}]{}
    child{node(1)[align=left, solid node, label=left:{Alice},
      label=right:{\begin{tabular}{c}
          \\
        Do you have $S_1 \in \{T_1^*, \overline{T_1^*} \}$ \\
        and $S_2 \in \{T_2^*, \overline{T_2^*} \}$? 
        \end{tabular} } ]{}
    child{node[hollow node,label=below:{\begin{tabular}{c}
          The mechanism \\
          knows the outcome
        \end{tabular} } ]{}
        edge from parent node[left]{yes}
    }
    child{node[hollow node,label=below:{\ldots}]{} 
        edge from parent node[right]{no}
    }
    edge from parent node[align=left, right,xshift=0]{What is your value on
        \\ $T_1^*\subseteq M_1$ and $T_2^*\subseteq M_2$? \\ }
    } ;
  \end{tikzpicture}
  \caption{Examples of the technical difficulties our proof needs to
  handle. These figures illustrate the first few layers of the game trees,
  while the remainder of the game is unspecified.
  Regardless of how the rest of the game computes $f$, 
  these examples illustrate that the required strategy of Bob cannot always 
  be constructed based on the first node where Alice acts (i.e. the first
  node where Alice acts may not satisfy \autoref{item:specificCrazy1} and
  \autoref{item:specificCrazy2} for any $(S_1,S_2), (T_1,T_2)$).  }
  \label{fig:difficulties} 
\end{figure}

Intuitively, we address the first issue by noting that, because Bob has more types
than there are nodes in the game tree, we can safely ignore any node
in which Bob has few types.
We fix the second issue by changing the proof outline overall.
Instead of taking an efficient mechanism $\M$ and finding a node $h$ satisfying
\autoref{item:specificCrazy1} and \autoref{item:specificCrazy2}
(thus showing that $\M$ is not DSIC), we use a proof by contradiction.
Intuitively, we consider an efficient mechanism in which no such
``crazy strategy'' of Bob be constructed, and
show that the questions such a mechanism can ask to Alice are so
restrictive that the mechanism cannot possibly handle all types Alice might
have.

\subsection{Separation of DSIC and EPIC without Transfers}
\label{app:DsicVsEpic}


We now prove that, without transfers, social choice function 
$f$ from \autoref{sec:construction} requires an 
exponential amount of communication to implement in dominant strategies.

\begin{theorem}
\label{thrm:EpicVsDsicNoTransfers}
  Any DSIC implementation of $f$ without transfers has communication cost
  $\widetilde\Omega(2^{m})$.
\end{theorem}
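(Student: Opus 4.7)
The plan is to argue by contradiction. Suppose there is a DSIC mechanism $\M = (G, S_A, S_B)$ for $f$ with communication $c = \widetilde{o}(2^m)$. By \autoref{lem:perfectInformation}, we may assume WLOG that $\M$ is perfect information. The goal is to locate a node $h$ of $G$ at which the setup of \autoref{claim:CrazyStratExists} applies, i.e., where both \autoref{item:specificCrazy1} and \autoref{item:specificCrazy2} hold, which contradicts that $\M$ is DSIC.

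First I would quantify which nodes are \emph{Bob-rich}. Bob has $2^{\Omega(\binom{m}{m/2})}$ types, whereas the game tree has only $2^c \le 2^{\widetilde{o}(2^m)}$ leaves, so along most truth-telling paths the set $\T_B(h)$ remains doubly exponential in $m$. I would single out a collection of reachable rich nodes where $|\T_B(h)|$ is still enormous; this bypasses challenge \autoref{item:SmallSubtreeChallenge}, since subtrees entered after Bob essentially reveals his full type account for only a negligible fraction of Bob's types and can be excluded from the analysis.

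Next I would track the information the mechanism extracts about Alice's type. Picking an Alice type $(S_1^*, S_2^*)$ and a Bob type that keeps the truth-telling path inside rich nodes, I would follow this path down the tree. Because $f$ genuinely depends on both coordinates (up to complementation), a counting/entropy argument against the $c = \widetilde{o}(2^m)$ communication budget should force Alice, at some rich node $h$ on the path, to take an action that separates $(S_1^*, S_2^*)$ from another Alice type $(T_1^*, S_2^*)$ sharing its second coordinate, yielding \autoref{item:specificCrazy1}. The richness of $\T_B(h)$ then supplies the two Bob types needed for \autoref{item:specificCrazy2}: one with $v_{B,1}(S_1^*) = v_{B,2}(S_2^*) = 5$, and another with $v_{B,1}'(S_1^*) = 0$. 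Applying \autoref{claim:CrazyStratExists} delivers the contradiction.

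The main obstacle will be circumventing challenge \autoref{item:SmallQuestionChallenge}: a clever mechanism can ask Alice questions whose truthful answer is dominant even though the answer reveals information, for example by restricting attention at each step to a small list of candidate pairs against which Alice can always fall back on a known outcome on one coordinate. To handle this I expect to choose $h$ not as the first node where Alice acts non-trivially, but rather as one where Alice must separate two types agreeing on a coordinate \emph{and} the mechanism still retains enough uncommitted outcome on that coordinate for Bob to swing Alice's utility. Formalizing this likely requires a recursive or potential-function argument that tracks, in parallel, which coordinate's outcome is still open and how much uncertainty Bob retains about Alice's type on that coordinate, so that the separating action identified in the previous paragraph is guaranteed to occur within a rich node where Bob has the flexibility to exploit it.
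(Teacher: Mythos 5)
Your high-level plan (assume perfect information via \autoref{lem:perfectInformation}, isolate ``Bob-rich'' nodes by counting leaves, and then try to invoke \autoref{claim:CrazyStratExists} at some node on the path) matches the paper's setup, and your identification of \autoref{item:SmallSubtreeChallenge} and \autoref{item:SmallQuestionChallenge} as the two obstacles is exactly right. However, the last paragraph of your proposal is where the proof actually lives, and that step is missing. You write that ``a counting/entropy argument \ldots should force Alice, at some rich node $h$ on the path, to take an action that separates $(S_1^*, S_2^*)$ from another Alice type $(T_1^*, S_2^*)$ sharing its second coordinate,'' and then defer the hard part to a ``recursive or potential-function argument.'' This is precisely the direct-search strategy that the paper explains \emph{cannot} work node-by-node: the example in \autoref{item:SmallQuestionChallenge} (right side of \autoref{fig:difficulties}) is a mechanism where, at every single node on the path Alice's truth-telling strategy traverses, the truthful action is dominant, so no single node satisfies both \autoref{item:specificCrazy1} and \autoref{item:specificCrazy2}. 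A local counting argument at a fixed node cannot rule this out, because richness of $\T_B(h)$ alone does not force the \emph{specific} shattered pair you need to be the one Alice is separating.

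The missing idea is to invert the logic: rather than hunting for a node where a crazy strategy is constructible, assume DSIC and derive strong structural constraints on the game tree, then get a contradiction with \emph{correctness} at the leaves. Concretely, the paper proves \autoref{lem:ShatSameAction} (the contrapositive of \autoref{claim:CrazyStratExists} in ``index'' language: a shattered pair and any of its neighbors present at $h$ must take the same action), then strengthens it to \autoref{lem:ShatNbrSameAction} (every neighbor of a shattered pair \emph{remains} in $\T_A(h)$, and all of $\T_A^{\shatnbr}(h)$ takes one common action, via a two-step chain $(k_1,k_2)\sim(k_1,k_2')\sim(k_1',k_2')$). One then fixes a single ``typical'' Bob type $B^\star$ (\autoref{lem:AtypicalBobTypes}, \autoref{lem:UnshatteredCounting}) and follows, not a truth-telling path for a fixed Alice type as you propose, but the path that tracks $\T^{\shat}(h) \subseteq \T_A(h)$: at Bob nodes follow $B^\star$, at Alice nodes follow the common action of the shattered-neighbor set. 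This invariant is preserved all the way to a leaf, where it contradicts the fact that correctness forces $|\T_A(\ell)| = 1$ (\autoref{obs:ShatNbrEmptyAtLeaves}). Your fix for \autoref{item:SmallQuestionChallenge} amounts to gesturing at this machinery without supplying it; without \autoref{lem:ShatSameAction}, \autoref{lem:ShatNbrSameAction}, and the ``follow the shattered set'' invariant, the proposal does not close.
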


\begin{proof}

Fix a DSIC mechanism $\M$ and let $\comm$ be the communication of
$\M$. At the cost of blowing up the communication by a factor of two, we can
assume by 
\autoref{lem:perfectInformation} 
that $\M$ is perfect information.
Let $\M = (G, S_A, S_B)$, i.e. $G$ is the perfect information extensive form
game used by $\M$, and $S_A, S_B$ are the dominant type-strategy 
profile implementing $f$.
By \autoref{sec:TreeTypes},
each node $h$ of the game tree $G$ corresponds to a set of types
$\T_A(h)$ of Alice and $\T_B(h)$ of Bob, and each action taken at $h$
corresponds to partitioning the types of the player $\mathcal{P}(h) \in
\{A, B\}$ who acts at this node.
This partition is given by $S_i(\cdot)$, specifically,
for each node $h'$ immediately after $h$ in $G$,
a type $t\in \T_i(h)$ remains in $\T_i(h')$ if and only if $S_i(t)$
plays the action at $h$ which corresponds to $h'$.
We say that type $t$ takes action $a$ at $h$ if $S_i(t)$ plays $a$ at $h$.
Observe that for each $i$ and nodes $h, h'$ where $h$ is an ancestor
of $h'$, we have $\T_i(h') \subseteq \T_i(h)$.

Define $K := \binom{m}{m/2} \big/ 2$ and observe that $K =
\widetilde\Theta(2^m)$. Observe that, for $i \in [2]$, we can partition all
subsets of $M_i$ of size $m/2$ into (unordered) pairs of the form $(T,
\overline{T})$, and there will be exactly $K$ such pairs. Call these pairs
$P_{i,1}, \cdots, P_{i,K}$ in some canonical order. For the rest of this
section, we equivalently view a type $(v_{\bob, 1}, v_{\bob, 2})$ of Bob as
a pair of bit-strings $B = (B_1, B_2) \in \{0,1\}^K \times \{0,1\}^K$,
where $B_{i,k}$ for $k \in [K]$ specifies which set in $P_{i,k}$ Bob values
at $5$. Similarly, we can view Alice's type $(S_1, S_2)$ as a tuple $(k_1,
k_2, b_1, b_2)$ where $k_1, k_2 \in [K]$ are indices and $b_1, b_2$ are
bits, and, for $i \in [2]$, $S_i$ is the $b_i^{\text{th}}$ element in
$P_{i,k_i}$. As $b_1, b_2$ are irrelevant to the outcome of the mechanism,
using
\autoref{lem:noIrrelevantQuestions}\footnote{
  More formally, consider the partition of Alice's types given by
  $\{  \{(k_1,k_2,b_1,b_2)\}_{(b_1,b_2)\in\{0,1\}\times\{0,1\}}   
    \}_{(k_1,k_2)\in[K]\times[K]} $.
  For all fixed types of Bob, $f$ is constant on the above partition,
  and thus by \autoref{lem:noIrrelevantQuestions},
  we can assume that for all $h$, if one element of a set 
  $\{(k_1,k_2,b_1,b_2)\}_{(b_1,b_2)\in\{0,1\}\times\{0,1\}}$ is
  in $\T_A(h)$, then all elements of that set are in $\T_A(h)$.
},
we can assume without loss of
generality that for all $k_1,k_2 \in [K]$, and nodes $h$, we
either have that all Alice's types of the form $(k_1, k_2, \cdot, \cdot)
\in \T_{\alice}(h)$ or all types of the form $(k_1, k_2, \cdot, \cdot)
\notin \T_{\alice}(h)$. Thus, when talking about the sets $\T_{\alice}(h)$
for nodes $h$, we can view Alice's type as simply a pair of indices $(k_1,
k_2)$. We adopt this convention for the rest of this proof, and we consider
$\T_{\alice}(h)\subseteq [K]\times[K]$. Correspondingly,
we consider $S_A(\cdot)$ to be
a map from $[K]\times[K]$ to strategies in $\M$,
and refer to the actions taken by pairs $(k_1,k_2)\in[K]\times[K]$.

The social choice function $f$ is determined by Alice's
index on both sets of items, as well as Bob's value on those two indices.
Thus, for each leaf $\ell$, we have $\T_A(\ell) = \{(k_1,k_2)\}$ a
singleton\footnote{
    Bob's type, on the other hand, need only be determined on
    indices $k_1, k_2$.
    That is, for each leaf $\ell$, if we have $\T_A'(\ell) = \{(k_1,k_2)\}$,
    then for each $(B_1,B_2),(B_1',B_2') \in \T_B(\ell)$,
    we have $B_{1,k_1}=B_{1,k_1}'$ and 
    $B_{2,k_2}=B_{2,k_2}'$.
}, which will be a key observation in our proof. 


We now begin to build the language and tools needed to address the
considerations highlighted in \autoref{sec:TechnicalDifficulties}.

\begin{definition}[Shattered pairs]
\label{def:shattered}
Let $h$ be a node and $(k_1, k_2) \in [K] \times [K]$. We say that $(k_1,
k_2)$ is \emph{shattered} at $h$ if 
Bob's types $B = (B_1, B_2) \in \T_{\bob}(h)$, when restricted to
coordinates $k_1, k_2$, take on all four possible values. In
other words,
\[
\card*{\{ (B_{1,k_1}, B_{2,k_2}) \mid (B_1, B_2) \in \T_{\bob}(h) \}} = 4.
\]
We use $\T^{\shat}(h)\subseteq [K]\times[K]$ to denote the set of all pairs
$(k_1, k_2)$ that are shattered at $h$.
\end{definition}


For convenience, we define the ``neighbors'' of a pair $(k_1,k_2)$ to be
all those pairs with (at least) one index in common with $(k_1, k_2)$.
Note that $(k_1,k_2)\in\nbr(k_1,k_2)$.
\begin{definition}[Neighbors]
\label{def:nbr}
Let $(k_1, k_2) \in [K] \times [K]$. 
A \emph{neighbor} of $(k_1, k_2)$ is any pair of the form
form $(k_1, k')$ or $(k', k_2)$ for some $k' \in [K]$.
We use $\nbr(k_1, k_2)$ to denote the set of all neighbors of $(k_1, k_2)$.
\end{definition}

Our first two lemmas show that a pair being shattered at a node $h$
severely restricts which questions the mechanism can ask at $h$.
The first lemma corresponds to \autoref{claim:CrazyStratExists},
recast in the language of this proof.
More specifically, \autoref{item:specificCrazy2} from 
\autoref{sec:TechnicalDifficulties} corresponds to a pair $(k_1,k_2)$
being shattered at $h$, and \autoref{item:specificCrazy1} from 
\autoref{sec:TechnicalDifficulties} corresponds to $(k_1,k_2)$ taking a
different action than one of its neighbors.
These two items cannot simultaniously occur in a DSIC mechansism.
\begin{lemma}
\label{lem:ShatSameAction}
Consider any node $h$ with $\mathcal{P}(h)=\alice$
and shattered pair $(k_1,k_2)\in\T^{\shat}(h) \cap \T_A(h)$.
Then every pair in $\nbr(k_1,k_2) \cap \T_A(h)$ must take the same action
as $(k_1,k_2)$ at $h$.
\end{lemma}
\begin{proof}
Fix a $(k_1, k_2) \in \T^{\shat}(h) \cap \T_A(h)$.
Suppose for contradiction that there
exists a neighbor of $(k_1, k_2)$ which is in $\T_A(h)$, yet takes a
different action from $(k_1, k_2)$ at $h$. Without loss of generality,
assume this neighbor is of the form $(k_1, k_2')$.
We derive a contradiction by
constructing a ``crazy'' strategy for Bob, exactly as in
\autoref{claim:CrazyStratExists}, that violates the DSIC property.

Pick some $(B_1,B_2), (B_1',B_2') \in \T_B(h)$ with
$B_{1,k_1} = B_{2,k_2} = 1$ and $B_{1,k_1}'=0$
(these exist by \autoref{def:shattered}).
We define a strategy $s_B$ of Bob such that in subtree where Alice 
plays the action taken by $(k_1, k_2')$ at $h$,
$s_B$ plays the action played by $S_B( (B_1',B_2') )$.
In every other node of the game tree, $s_B$ plays the same 
action played by $S_B( (B_1, B_2) )$.
This completely specifies $s_B$.

Suppose Alice has type $(k_1, k_2, 1, 1)$ (that is, for $i\in[2]$,
her desired sets are in $P_{i,k_i}$, and her most preferred set
is the one from $P_{i,k_i}$ which $(B_1,B_2)$ values at $5$).
When Alice plays $S_A( (k_1, k_2) )$, she is allocated her less preferred
set on both $M_1$ and $M_2$, and thus gets utility $2$.
But if Alice deviates and plays $S_A( (k_1, k_2') )$,
on $M_1$ she receives her most preferred set, which she values at $4$.
Thus, truth telling is not a best response for Alice with type 
$(k_1, k_2)$, and thus $\M$ is not DSIC.
\end{proof}

We just showed that if some pair is shattered and lies in some $\T_A(h)$,
then all of its neighbors who are also in $\T_A(h)$ must take the same
action.
Next, we need something stronger, namely that all shattered pairs at $h$
take the same action.
Along the way we prove that additionally that if a pair is shattered and
lies in $\T_A(h)$, then all of its neighbors must lie in $\T_A(h)$.

\begin{lemma}
\label{lem:ShatNbrSameAction}
Let $h$ be any node, and consider the set
\[
\T_{\alice}^{\shatnbr}(h) := 
  \bigcup_{(k_1, k_2) \in \T^{\shat}(h) \cap \T_A(h)} \nbr(k_1, k_2) .
\]
Then we have that $\T_{\alice}^{\shatnbr}(h) \subseteq \T_{\alice}(h)$.
Moreover, if $\mathcal{P}(h) = \alice$, then all types in 
$\T_A^{\shatnbr}(h)$ must take the same action at $h$.
\end{lemma}
\begin{proof}


First, we prove that $\T_A^{\shatnbr}(h)\subseteq \T_{\alice}(h)$.
Suppose for contradiction that this is not the case.
This means that 
there exists $(k_1,k_2)\in \T^{\shat}(h) \cap \T_A(h)$ and 
$(k_1',k_2')\in \nbr(k_1,k_2)$, but $(k_1',k_2')\notin\T_A(h)$.

Consider the node $h'$ which is the latest Alice node along the
path from the root to $h$ at which $(k_1',k_2')\in\T_A(h')$.
By definition, $(k_1,k_2)$ and $(k_1',k_2')$ take different
actions at $h'$.
Observe that, because $\T_B(h)\subseteq\T_B(h')$,
we also have $\T^{\shat}(h)\subseteq\T^{\shat}(h')$, and thus
$(k_1,k_2)$ is shattered at $h'$.
But then, by \autoref{lem:ShatSameAction}, $(k_1,k_2)$
and $(k_1',k_2')$ must take the same action at $h'$,
a contradiction.

Conclude using $\T_A^{\shatnbr}(h)\subseteq \T_{\alice}(h)$ and \autoref{lem:ShatSameAction} that,
if $(k_1,k_2) \in \T^{\shat}(h)\cap \T_A(h)$,
then each pair in $\nbr(k_1,k_2)$ takes the same action at $h$.
Thus, to prove that all types in $\T_{\alice}^{\shatnbr}(h)$ take
the same action, it suffices to show that if
$(k_1,k_2), (k_1',k_2') \in \T^{\shat}(h)\cap \T_A(h)$
with $k_1\ne k_1'$ and $k_2\ne k_2'$, then $(k_1,k_2)$ and $(k_1',k_2') $
must still take the same action at $h$.

To prove this, we make use of the fact that 
$\T_{\alice}^{\shatnbr} \subseteq \T_A(h)$.
In particular, means that $(k_1, k'_2) \in \nbr(k_1, k_2) \subseteq \T_A(h)$.
By \autoref{lem:ShatSameAction}, 
$(k_1, k'_2)$ must take the same action as $(k_1, k_2)$.
By the exact same logic, $(k_1, k'_2)$ must take the same 
action as $(k'_1, k'_2)$.
Thus, $(k_1,k_2)$ and $(k_1',k_2')$ take the same action at $h$,
and so does every pair in $\T_{\alice}^{\shatnbr}(h)$.

\end{proof}

While $\T^{\shat}(h)\subseteq [K]\times[K]$ is defined entirely in terms of
Bob's types, $\T_{\alice}^{\shatnbr}(h)\subseteq \T_{\alice}(h)$
tells us information about Alice's types as well.
This provides us with a convenient way to describe the rest of the proof,
in terms of the following observation:

\begin{observation}
\label{obs:ShatNbrEmptyAtLeaves}
For all leaves $\ell$ of the game tree, we have
$\T_{\alice}^{\shatnbr}(\ell) = \emptyset$
(that is, ${ \T^{\shat}(\ell)\cap\T_{\alice}(\ell) = \emptyset }$).
\end{observation}
\begin{proof}
    Recall that social choice function $f$ is determined by Alice's
    index on both sets of items, as well as Bob's value on those two indices.
    In particular, at every leaf node, the mechanism must completely
    know Alice's pair in order to correctly compute $f$.
    Thus, for all leaves $\ell$ of the game tree,
    $\T_{\alice}(\ell)$ is a singleton.
    Using \autoref{lem:ShatNbrSameAction} and the fact that 
    $|\nbr(k_1,k_2)|>1$,
    this is possible only if all leaves $\ell$ satisfy
    $\T_{\alice}^{\shatnbr}(\ell) = \emptyset$.
\end{proof}
Our task in the remainder of the proof is to show that, if the communication
cost $\comm$ of $\M$ is sufficiently small, then there must exist a leaf $\ell$
with $\T_{\alice}^{\shatnbr}(\ell) \ne \emptyset$.

Note that it is possible for a mechanism with exponential communication 
to satisfy $\T_{\alice}^{\shatnbr}(\ell) = \emptyset$ at every leaf. 
Indeed, in the direct revelation mechanism where Bob reveals his entire
type, $\T_{\bob}(\ell)$ is singleton for every leaf,
and thus $\T^{\shat}(\ell) = \emptyset$.
However, our next two lemmas shows that in low-communication mechanisms,
very few types of Bob can ever end up at leafs $\ell$ in which 
$|\T^{\shat}(\ell)|$ is small.
This serves to address \autoref{item:SmallSubtreeChallenge} from
\autoref{sec:TechnicalDifficulties}.
Our next lemma is a standard communication complexity argument,
and intuitively states that ``typical'' Bob types always end up in leaves
with a large number of Bob types.


\begin{lemma} 
  \label{lem:AtypicalBobTypes}
  Let $\TypicalBob\subseteq \T_{\bob}$ denote the set of all 
  Bob types $B \in \T_{\bob}$ such that for all leaves $\ell$ with 
  $B \in \T_{\bob}(\ell)$, we have 
  $\card*{\T_{\bob}(\ell)} \geq \card*{\T_{\bob}} \cdot 4^{-2 \comm}$.
  We have
  \[ \card*{\TypicalBob} \geq \card*{\T_{\bob}} \cdot (1 - 4^{-\comm})
  \]
\end{lemma}
\begin{proof}
We show that $\card*{\T_{\bob} \setminus \TypicalBob} \leq 4^{-\comm} \cdot
\card*{\T_{\bob}}$. Indeed, for all $B \in \T_{\bob} \setminus \TypicalBob$,
there exists a leaf $\ell$ such that $B \in \T_{\bob}(\ell)$
and $\card*{\T_{\bob}(\ell)} < \card*{\T_{\bob}} \cdot 4^{-2 \comm}$.
There are at most $2^{\comm}$ leaves in $G$.
This gives:
\begin{align*}
\card*{\T_{\bob} \setminus \TypicalBob} 
\leq \sum_{\substack{\text{leaf $\ell$ such that}
  \\\card*{\T_{\bob}(\ell)} < \card*{\T_{\bob}} \cdot 4^{-2 \comm}}
  }
  \card*{\T_{\bob}(\ell)}
\leq \sum_{\text{leaf } \ell} \card*{\T_{\bob}} \cdot 4^{-2 \comm} 
\leq \card*{\T_{\bob}} \cdot 4^{-\comm}.
\end{align*}
\end{proof}

Next, we show that in all ``typical nodes'' (that is, nodes containing
even one type from $\TypicalBob$), \emph{most} of the pairs in
$[K]\times[K]$ are shattered. 
This follows from a combinatorial argument -- if a lot of pairs are
\emph{not} shattered at $h$, then there cannot possibly be enough types in
$\T_{\bob}(h)$ for $h$ to be ``typical''.

\begin{lemma}
  \label{lem:UnshatteredCounting}
  For any node $h$ such that
  $\T_{\bob}(h)\cap \TypicalBob \ne \emptyset$,
  we have $\card*{\T^{\shat}(h)} \geq K^2 - 10 K\comm$.
\end{lemma}
\begin{proof}
  Fix a node $h$ with $\T_{\bob}(h)\cap \TypicalBob \ne \emptyset$.
  Some descendent of $h$ is a leaf node $\ell$ where  
  $\T_{\bob}(\ell)\cap \TypicalBob \ne \emptyset$.
  By the definition of $\TypicalBob$, this means
  $\card*{\T_{\bob}(\ell)} \ge |\T_{\bob}|\cdot 4^{-2\comm}$. 
  Thus $|\T_{\bob}(h)| \ge |\T_{\bob}|\cdot 4^{-2\comm}$ as well.
  This condition will suffice to bound $\card*{\T^{\shat}(h)}$.

  We consider the set $\overline{\T^{\shat}(h)}
  =  [K] \times [K] \setminus \T^{\shat}(h)$ of \emph{unshattered} pairs,
  and proceed by showing that
  $\card{ \overline{\T^{\shat}(h)} }\leq 10 K\comm$.
  To this end, observe that 
  each unshattered pair $(k_1, k_2)$ can be uniquely written
  as $(k, k+d)$ for some values of $k, d \in [K]$ (where we take indexes mod $K$). To show that $\card{ \overline{\T^{\shat}(h)} }\leq 10 K\comm$, we actually show that for all $d \in [K]$, the number of pairs of the form $(k, k+d) \in \overline{\T^{\shat}(h)}$  is at most $10 \comm$. Summing over all $d \in [K]$ then proves the lemma.

  Fix a $d$ and suppose for contradiction that 
  there were more than $R = 10\comm$ pairs in
  $\overline{\T^{\shat}(h)}$ of the form $(k, k+d)$.
  For any fixed $d$, all of the $4^K$ types of Bob in $\T_B$
  can be uniquely described by specifying Bob's value
  on $(k, k+d)$ for each $k\in[K]$,
  that is, by specifying for each $k\in[K]$ 
  one of the four possible values
  of $(B_{1, k}, B_{2, k+d})\in \{0,1\}\times\{0,1\}$.

  If $(k, k+d)$ is shattered at $h$, then the types of Bob in
  $\T_B(h)$ can take on all $4$ possible values on indexes $(k, k+d)$.
  However, if $(k, k+d)$ is unshattered at $h$, then there is at least one
  of the $4$ options for Bob's type on indexes $(k,k+d)$
  which never occurs in $\T_B(h)$.
  Thus, the types of Bob in
  $\T_B(h)$ can take on at most $3$ possible values on indices $(k,k+d)$.
  Thus, the number of types in $\T_B(h)$ satisfies
  \[ \card*{\T_{\bob}(h)} 
    \le 3^R \cdot 4^{K - R} = 4^{K - (1 - \log_4 3) R} < 4^{K - (1/5)R}.
  \]
  Plugging $R= 10 \comm$, we have
  $\card*{\T_{\bob}(h)} < 4^{K - 2 \comm} = |\T_B|\cdot 4^{-2\comm}$,
  which contradicts what we know about $|\T_B(h)|$.

\end{proof}

Even in communication efficient mechanisms, there can be leaf nodes with
$\T^{\shatnbr}_A(\ell) = \emptyset$.
The right hand side of \autoref{fig:difficulties},
illustrating \autoref{item:SmallQuestionChallenge}
in \autoref{sec:TechnicalDifficulties}, gives an example.
Looking into this example deeper, we see the reason: at the node $h$ 
where Alice acts, if we take the action \emph{not taken by the pairs in
$\T^{\shat}(h)$}, then we arrive at a leaf node with 
$\T^{\shat}(\ell) \cap \T_{\alice}(\ell) = \emptyset$.
Thus, intuitively, our approach for the remainder of this proof
is to follow the actions taken by $\T^{\shat}(h)$
in order to arrive at a node with
$\T_{\alice}^{\shatnbr}(\ell) \ne \emptyset$.

We now begin to wrap up our proof.
Assume for contradiction that $\comm < K/10$. By \autoref{lem:AtypicalBobTypes}, 
we get that $\comm < K/10$ implies that $\TypicalBob \neq \emptyset$. 
Fix an arbitrary $B^{\star} \in \TypicalBob$.
From \autoref{lem:UnshatteredCounting}, for any node $h$ such that 
$B^{\star} \in \T_{\bob}(h)$, we have $\T^{\shat}(h) \neq \emptyset$. 

Define a collection of nodes $H^{\star}$ in $G$ as follows:
\[ H^{\star} = \{ h \mid B^{\star} \in \T_{\bob}(h),
  \ \T^{\shat}(h)\subseteq \T_A(h) \}
\]
Observe that the root $h_0$ of $G$ is in $H^{\star}$
(because $\T_{\alice}(h_0) = \T_{\alice}$ and  $\T_{\bob}(h_0) = \T_{\bob}$),
and thus $H^{\star}\ne \emptyset$.
Now, define $h^{\star}$ to any node in $H^{\star}$ for which no descendent of 
$h^{\star}$ is in $H^{\star}$\footnote{
  One can use \autoref{lem:ShatNbrSameAction} to show that $h^{\star}$ is
  unique and $H^*$ forms a path from the root to a leaf.
  However, this is not needed for our argument to go through.
}.


First, we claim that $h^{\star}$ cannot be a node where Bob acts.
Otherwise, consider the child $h'$ of $h^{\star}$ corresponding
to the action taken by $B^{\star}$ at $h^{\star}$.
At $h'$, Alice's type set remains the same, while sets $\T_B(h')$ and thus 
$\T^{\shat}(h')$ have only decreased from $h^{\star}$.
Thus,
\[
\T^{\shat}(h') \subseteq \T^{\shat}(h^{\star}) 
  \subseteq \T_{\alice}(h^{\star}) = \T_{\alice}(h'),
\]
and $h'\in H^{\star}$. This contradictions the choice of $h^{\star}$. 

Next, we claim $h^{\star}$ cannot be a node where Alice acts.
Suppose otherwise. Because
$\T^{\shat}(h^{\star})\subseteq\T_A(h^{\star})$,
we have $\T^{\shat}(h^{\star})\subseteq\T_{\alice}^{\shatnbr}(h^{\star})$.
Because $\T^{\shat}(h^{\star})\ne\emptyset$,
we have $\T_{\alice}^{\shatnbr}(h^{\star})\ne\emptyset$.
By \autoref{lem:ShatNbrSameAction}, there is thus a single child 
$h'$ of $h^{\star}$ such that every pair in $\T_{\alice}^{\shatnbr}(h^{\star})$
takes the action leading to $h'$. 
As Bob's type set is unchanged at $h'$, we thus get
\[
\T^{\shat}(h') = \T^{\shat}(h^{\star})
\subseteq \T_{\alice}^{\shatnbr}(h^{\star}) \subseteq \T_{\alice}(h'),
\]
and $h'\in H^{\star}$. This contradictions the choice of $h^{\star}$. 

This means that $h^{\star}$ must be a leaf node.
But then we have $\emptyset \neq \T^{\shat}(h^{\star}) \subseteq \T_{\alice}(h^{\star})$,
and thus $\T_{\alice}^{\shatnbr}(h^*) \ne \emptyset$.
This contradicts \autoref{obs:ShatNbrEmptyAtLeaves}.

Thus, for any mechanism $M$ which DSIC implements $f$ without transfers,
$H_M \ge (1/10)K = \widetilde\Omega(2^m)$.

\end{proof}

We remark that this theorem is tight up to constants, as the direct
revelation mechanism asking Bob to reveal his entire type has communication cost $\mathcal{O}(K)$.

\subsection{Extension to the Case with Transfers}
\label{sec:DsicVsEpicTransfers}

We have shown that social choice function $f$, which is incentive compatible
without transfers, cannot be efficiently DSIC implemented without transfers.
However, this does not yet rule out the existence of certain
transfer functions which can efficiently DSIC implement $f$\footnote{
  Note that in principle it is possible for certain transfer functions to
  make a mechanism DSIC, but for others to render a mechanism EPIC but not
  DSIC. For example, suppose Alice has two types $L, R$ and Bob has two types
  $a, b$, and we have $f(L, a) = 1, f(L, b)=2, f(R,a)=3, f(R,b)=4$.
  Let Alice with type $L$ value $1,2,3,4$ at $10, 7, 8, 1$ respectively,
  and Alice with type $R$ value $1,2,3,4$ at $0, 0, 10, 10$ respectively.
  Bob's valuations are irrelevant.
  Consider the perfect information mechanism sequentially asking Alice for
  her type, then Bob for his.
  If no transfers are included (that is, all transfers are $0$) then this
  mechanism is EPIC but not DSIC.
  If the transfers to Alice when the outcome is $1,2,3,4$ are $0,2,0,0$
  respectively (that is, we pay Alice $2$ when outcome $2$ is selected),
  then this mechanism is DSIC.
  We do not know if $f$ has an efficient DSIC implementation with
  transfers, but we suspect it does not.
}.

\paragraph{Modified construction.}
We now describe how to modify our construction
to prove our separation even in the regime with transfers.
For each $i\in[2]$ and each pair of subsets of items of the form
$\{T, \overline{T} \} \subseteq M_i$ where $|T|=m/2$,
we add a type of Alice which values every set at $0$.
The outcome when Alice has such a type on $M_i$ is identical to if Alice
had positive utility for $T$ and $\overline{T}$ originally.
That is, Bob receives whichever set among $\{T, \overline{T} \}$
he values at $5$, and Alice receives the complement.

More formally, the set of outcomes and Bob's types (and Bob's utility for
each outcome) remains unchanged, but Alice's type set changes.
For $i\in[2]$, we let $\mathcal{Q}_i$ denote the collection of all subsets
of $M_i$ of size $m/2$, and let
$\mathcal{R}_i$ denote the collection of all
unordered pairs of subsets of $M_i$ of the form $\{T, \overline{T}\}$
with $|T|=m/2$.
Alice's new set of types are then $\T_A' =
(\mathcal{Q}_1\cup\mathcal{R}_1) \times (\mathcal{Q}_2\cup\mathcal{R}_2)$.
We use $\mathcal{R}_i$ to represent the cases where Alice gets $0$ value
from sets of items in $M_i$.
Specifically, Alice's utility function
$u_{\alice}' : \T_{\alice}' \times Y \to \mathbb{R}$
is now defined by $u_{\alice}'((S_1, S_2), (X_1, X_2))
= u_{\alice, 1}'(S_1, X_1) + u_{\alice, 2}'(S_2, X_2)$,
where, for $i \in [2]$, we have:
\[
 u_{\alice, i}'(S_i, X_i) = \begin{cases}
 4, &\text{~if~} S_i \in \mathcal{Q}_i \text{~and~} S_i = X_i \\ 
 1, &\text{~if~} S_i \in \mathcal{Q}_i \text{~and~} S_i = \overline{X_i} \\
 0, &\text{~otherwise}.
 \end{cases}
\]
In particular, $u_{A,i}'(S_i,X_i) = 0$ whenever $S_i \in \mathcal{R}_i$.

We define the social choice function $f'$ as follows:
Let $f'\big( (S_1, S_2), \allowbreak (v_{\bob, 1}, v_{\bob, 2}) \big) = (X_1, X_2)$,
where for $i\in[2]$, if $S_i \in \mathcal{R}_i$, then $X_i \in S_i$ is such
that $v_{\bob, i}(\overline{X_i}) = 5$, and if $S_i \in \mathcal{Q}_i$,
then $X_i = S_i$ if $v_{\bob, i}(\overline{S_i}) = 5$ and
$X_i = \overline{S_i}$ if $v_{\bob, i}({S_i}) = 5$.
Observe that $f'$ is still welfare maximizing (although we use a very
specific tie-breaking rule for those Alice types with $S_i \in
\mathcal{R}_i$).

Intuitively,
the addition of Alice types which are irrelevant to all outcomes allows us
to say that the mechanism cannot award transfer to Alice in a nontrivial way.
This allows us to reduce to the case without transfers.
We make this formal below.


\begin{restatable}{theorem}{restateDsicVsEpic}\label{thm:DsicVsEpic}
  There exists an EPIC implementation of $f'$ with communication cost
  $\BigO(m)$.
  However, any DSIC implementation of $f'$ with transfers has communication cost
  $\widetilde\Omega(2^m)$. That is,
  \[ CC^{EPIC}(f') = \BigO(m) \qquad\qquad 
    CC^{DSIC}(f') = \widetilde\Omega(2^m).
  \]
\end{restatable}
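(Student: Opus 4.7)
The plan is to establish the EPIC upper bound via a direct extension of $\Mpar$ from \autoref{sec:construction}, and the DSIC lower bound via a reduction to (an extension of) \autoref{thrm:EpicVsDsicNoTransfers}, using the null Alice types in $\mathcal{R}_1\times\mathcal{R}_2$ to neutralize the effect of transfers. For the EPIC upper bound, I would adapt $\Mpar$ to the enlarged type space $\T'_A$: Alice announces, for each $i\in[2]$, either the pair $\{S_i, \overline{S_i}\}$ (if $S_i\in\mathcal{Q}_i$) or $R_i$ itself (if $S_i\in\mathcal{R}_i$) in $\BigO(m)$ bits, after which Bob indicates which element of each pair he values at $5$ in $\BigO(1)$ bits. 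Transfers are zero. EPIC for $\mathcal{Q}$-types follows exactly as in \autoref{sec:construction}, while $\mathcal{R}$-types are trivially indifferent ($u_A\equiv 0$ for every outcome), so no deviation can help them.

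For the DSIC lower bound, the first step is to exploit the null Alice types to pin down transfers. For any non-null Alice type $\tau = (S_1, S_2)$, the twin $\tau_0 = (\{S_1, \overline{S_1}\}, \{S_2, \overline{S_2}\}) \in \mathcal{R}_1 \times \mathcal{R}_2$ satisfies $f'(\tau, b) = f'(\tau_0, b)$ for every Bob type $b$. Since $u_A(\tau_0, \cdot)\equiv 0$, DSIC applied to $\tau_0$ forces $p_A(S_A(\tau_0), s_B) \geq p_A(s_A', s_B)$ for all $s_A', s_B$; let $p^*(s_B)$ denote this maximum. Combining with DSIC for $\tau$ deviating to $S_A(\tau_0)$ against any \emph{consistent} Bob strategy $s_B = S_B(b)$---where both strategies reach the same outcome---one concludes $p_A(S_A(\tau), S_B(b)) = p^*(S_B(b))$ for every Alice type $\tau$ and every Bob type $b$.

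The core step is to re-establish an analog of \autoref{lem:ShatSameAction} in the presence of transfers. Given a shattered pair $(k_1, k_2) \in \T^{\shat}(h)\cap\T_A(h)$ with a neighbor $(k_1, k_2')$ taking a different action at $h$, shattering provides Bob types $(B_1, B_2), (B_1', B_2')\in \T_B(h)$ with $(B_{1,k_1}, B_{2,k_2})=(1,1)$ and $(B_{1,k_1}', B_{2,k_2}')=(0,0)$. I would construct two crazy Bob strategies: $s_B^{(1)}$ mimicking $S_B((B_1, B_2))$ outside the $(k_1, k_2')$-subtree and $S_B((B_1', B_2'))$ inside, and $s_B^{(2)}$ swapping these two roles. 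Because Alice's (deviation) paths reach leaves that coincide with appropriate truth-telling leaves of consistent Bob types, the transfer-pinning step collapses all involved transfers to the values $p^*(S_B((B_1, B_2)))$ and $p^*(S_B((B_1', B_2')))$. Applying DSIC for Alice type $(k_1, k_2, 1, 1)$ against $s_B^{(1)}$ (truth utility $2$, deviation utility $4$) yields $p^*(S_B((B_1, B_2))) - p^*(S_B((B_1', B_2'))) \geq 2$, while applying it for $(k_1, k_2, 0, 0)$ against $s_B^{(2)}$ yields the reverse inequality, producing a direct contradiction. With this extension of \autoref{lem:ShatSameAction}, the remaining combinatorial arguments of \autoref{thrm:EpicVsDsicNoTransfers} (the shattering and neighbor counting in \autoref{lem:UnshatteredCounting} and \autoref{lem:AtypicalBobTypes}, and the descent through $H^{\star}$) carry through verbatim to give $\widetilde\Omega(2^m)$.

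The main obstacle is the two-crazy-strategy construction: the Bob-type roles must be swapped in exactly the right way so that the two DSIC inequalities cancel the potentially arbitrary values of $p^*(\cdot)$. This works because shattering simultaneously provides all four combinations of $(B_{1, k_1}, B_{2, k_2})$ at $h$, and because the utility structure of $f'$ is symmetric under the bit-flip $(S_1, S_2) \mapsto (\overline{S_1}, \overline{S_2})$ of Alice's preferred sets, which is what lets the same pair of Bob types play opposite roles in the two arguments.
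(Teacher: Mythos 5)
Your approach shares the paper's key insight (use the null Alice types in $\mathcal{R}_1\times\mathcal{R}_2$ to neutralize Alice's transfers), but deploys it differently. The paper first zeros out Bob's transfers, then applies \autoref{lem:perfectInformation} and \autoref{lem:noIrrelevantQuestions}, then uses the null types together with a crazy Bob strategy to argue that Alice's transfers are constant at all leaves below any Alice node, zeros them out too, and reduces the whole problem to \autoref{thrm:EpicVsDsicNoTransfers} as a black box. You instead keep Alice's transfers alive, pin them to $p^*(s_B)$ via DSIC for the null type, and re-derive \autoref{lem:ShatSameAction} in the presence of transfers via the two-crazy-strategy symmetry. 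I checked your pinning identity and the utility arithmetic; $p^*(S_B((B_1,B_2))) \ge 2 + p^*(S_B((B_1',B_2'))) \ge 4 + p^*(S_B((B_1,B_2)))$ does yield a contradiction, the key point being that the deviation leaves of the crazy strategies are also truth-telling leaves against consistent Bob types, so their transfers are pinned. This is a genuinely different route that sidesteps the paper's ``transfers constant below Alice nodes'' step.

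There is, however, a real gap. You silently assume perfect information and the irrelevant-question normalization (so that all $(k_1,k_2,v_1,v_2)$ with fixed $(k_1,k_2)$ follow the same path, which is what makes the ``$(k_1,k_2')$-subtree'' that $s_B^{(1)}$ and $s_B^{(2)}$ condition on well-defined, and is also needed for the shattering machinery indexed by pairs $(k_1,k_2)$). But both \autoref{lem:perfectInformation} and \autoref{lem:noIrrelevantQuestions} require Bob to receive zero transfers and constant utility, and in the with-transfers setting this fails a priori: Bob's utility is $10+p_B$ for a possibly varying $p_B$. You must first argue, as the paper does, that replacing all transfers to Bob with $0$ preserves DSIC (it does, because Bob can already guarantee himself utility $10$ against any Alice strategy). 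Without this step, the two normalization lemmas do not apply, and the claim that the remaining combinatorial arguments of \autoref{thrm:EpicVsDsicNoTransfers} ``carry through verbatim'' is not justified. A minor further imprecision: for mixed types such as $\tau=(S_1,R_2)$ with $S_1\in\mathcal{Q}_1,\ R_2\in\mathcal{R}_2$, the twin you describe is $(\{S_1,\overline{S_1}\},R_2)$, so the phrasing should cover that case, though the fix is cosmetic.
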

\begin{proof}

Consider the mechanism which asks Alice to reveal her entire type,
tells that type to Bob, and asks Bob to choose an outcome.
All transfers are $0$.
This is EPIC, for the exact same reason that $\Mpar$
in \autoref{sec:construction} is EPIC.
Moreover, the communication cost is $\BigO(m)$, as desired.

Now, consider a mechanism $\M_0$ which DSIC implements $f'$ with transfers.

In principle, this mechanism may provide nonzero transfers to Bob
(specifically, if Alice acts at the root node, this action
may change the transfer to Bob arbitrarily).
However, observe that if we replace every transfer to Bob with $0$, the
result is still DSIC. This is because incentives have changed only for Bob,
but Bob can now guarantee himself utility $10$ when he follows
$S_{\bob}(t_{\bob})$ (and this is the highest utility he can achieve).
Let $\M_1$ denote the mechanism that sets every transfer to Bob in 
$\M_0$ to $0$.

Now that Bob has constant utility in every outcome selected by $\M_1$,
let $\M_2$ denote the result of applying
\autoref{lem:perfectInformation}
to $\M_1$ to get a
perfect information DSIC mechanism.
This leaves the transfers unchanged, and effects the communication cost by
only a constant factor.


We describe and partition Alice's types $\T_A'$ in a similar way to how
we partitioned them in the proof of \autoref{thrm:EpicVsDsicNoTransfers}.
Let $K := \binom{m}{m/2} \big/ 2$ and, for $i\in[2]$,
recall that $\mathcal{R}_i$ denotes a partition of subsets of 
$M_i$ of size $m/2$ into pairs $\{T, \overline{T}\}$.
Index the pairs in this partition by $k\in[K]$.
Then the type of Alice can be described by tuple $(k_1, k_2, v_1, v_2)$ 
where $k_1, k_2 \in [K]$ and 
$v_1, v_2 \in \{0,1,2\}$.
Specifically, for $i\in[2]$, the index $k_i$ specifies which set in
$\mathcal{R}_i$ Alice's type corresponds to,
and $v_i\in\{0,1,2\}$ specifies 
whether Alice's most preferred set is
$T$ (when $v_i=0$), $\overline{T}$ (when $v_i=1$),
or neither (i.e. if $v_i = 2$, then Alice receives $0$ utility
from all subsets of $M_i$).

Observe that for every type of Bob,
$f'$ is independent of the values $v_1,v_2$,
and depends only on the indices $(k_1,k_2)$.
That is, $f'$ is constant on every element of the partition of Alice's
types given by 
\[ \{ \{(k_1, k_2, v_1, v_2)\}_{v_1,v_2\in\{0,1,2\}}
  \}_{(k_1,k_2)\in[K]\times[K]}. \]
We can thus apply 
\autoref{lem:noIrrelevantQuestions} 
to $\M_2$ to 
get a mechanism $\M_3$, which is still DSIC and has 
the same communication cost as $\M_2$,
and additionally never distinguishes between types of the form
$(k_1, k_2, \cdot, \cdot)$.
Formally, in $\M_3$, if we have 
$(k_1, k_2, v_1, v_2) \in \T_A(h)$ for some $v_1,v_2$,
then we have $(k_1, k_2, v_1, v_2) \in \T_A(h)$ for every $v_1,v_2$.

In principle, $\M_3$ can provides non-constant transfers to Alice.
Specifically, Bob can act at the root node in a way which changes the
transfer to Alice arbitrarily.
However, it turns out that this is all that is possible.
Namely, there cannot be any Alice node $h$ with two leaf nodes,
$\ell_1, \ell_2$ which are descendants of $h$,
such that Alice receives different transfers in $\ell_1$ and $\ell_2$,
If there were, Alice would have a strategic manipulation when her type is
indifferent on both sets of items, i.e. when $v_1 = v_2 = 2$.
Specifically, suppose the transfer at $\ell_1$ is higher than
the transfer at $\ell_2$. Because the game is perfect information,
there is some strategy of Bob such that, when Alice
follows actions directed towards $\ell_1$, Bob takes actions directed
towards $\ell_1$, and when Alice takes actions directed toward $\ell_2$,
Bob takes actions directed towards $\ell_2$.
Then, whichever Alice type takes $\ell_2$ has a strategic manipulation
against this strategy of Bob when her type has $v_1=v_2=2$.

Now, consider replacing every transfer to Alice in $\M_3$ with $0$
to get a mechanism $\M_4$.
By the reasoning in the preceding paragraph,
the strategic situation for Alice has not changed at any node.
More specifically, for each Alice node $h$, the transfers in $\M_3$ were
constant at every leaf below $h$.
This is still true in $\M_4$. Thus, if there were no strategic
manipulations in $\M_3$, there can be no strategic manipulations in $\M_4$.

Thus, $\M_4$ constitutes an implementation of $f'$ in which the transfers
to both agents are always $0$.
By simply ignoring the values of $v_1,v_2$, this constitutes
an implementation of $f$ from \autoref{sec:construction}.
By \autoref{thrm:EpicVsDsicNoTransfers}, this means that the communication
cost of $\M_4$ (and thus $\M_0$) is $\widetilde\Omega(2^m)$.

%
%

\end{proof}


\bibliographystyle{alpha}
\bibliography{MasterBib}{}

\newcommand{\etalchar}[1]{$^{#1}$}
\begin{thebibliography}{AKSW20}

\bibitem[AG18]{AshlagiG18}
Itai Ashlagi and Yannai~A Gonczarowski.
\newblock Stable matching mechanisms are not obviously strategy-proof.
\newblock {\em Journal of Economic Theory}, 177:405--425, 2018.

\bibitem[AKSW20]{AssadiKSW20}
Sepehr Assadi, Hrishikesh Khandeparkar, Raghuvansh~R. Saxena, and S.~Matthew
  Weinberg.
\newblock Separating the communication complexity of truthful and non-truthful
  combinatorial auctions.
\newblock In Konstantin Makarychev, Yury Makarychev, Madhur Tulsiani, Gautam
  Kamath, and Julia Chuzhoy, editors, {\em Proccedings of the 52nd Annual {ACM}
  {SIGACT} Symposium on Theory of Computing, {STOC} 2020, Chicago, IL, USA,
  June 22-26, 2020}, pages 1073--1085. {ACM}, 2020.

\bibitem[AS19]{AssadiS19}
Sepehr Assadi and Sahil Singla.
\newblock Exponentially improved truthful combinatorial auctions with
  submodular bidders.
\newblock In {\em Proceedings of the Sixtieth Annual IEEE Foundations of
  Computer Science (FOCS)}, 2019.

\bibitem[BBS13]{BabaioffBS13}
Moshe Babaioff, Liad Blumrosen, and Michael Schapira.
\newblock The communication burden of payment determination.
\newblock {\em Games and Economic Behavior}, 77(1):153 -- 167, 2013.

\bibitem[BDF{\etalchar{+}}10]{BuchfuhrerDFKMPSSU10}
David Buchfuhrer, Shaddin Dughmi, Hu~Fu, Robert Kleinberg, Elchanan Mossel,
  Christos~H. Papadimitriou, Michael Schapira, Yaron Singer, and Christopher
  Umans.
\newblock {Inapproximability for VCG-Based Combinatorial Auctions}.
\newblock In {\em Proceedings of the Twenty-First Annual ACM-SIAM Symposium on
  Discrete Algorithms (SODA)}, 2010.

\bibitem[BG17]{BadeG17}
Sophie Bade and Yannai~A. Gonczarowski.
\newblock Gibbard-satterthwaite success stories and obvious strategyproofness.
\newblock In {\em Proceedings of the 2017 ACM Conference on Economics and
  Computation}, EC '17, page 565, New York, NY, USA, 2017. Association for
  Computing Machinery.

\bibitem[BKS15]{BabaioffKS15}
Moshe Babaioff, Robert~D. Kleinberg, and Aleksandrs Slivkins.
\newblock Truthful mechanisms with implicit payment computation.
\newblock {\em J. ACM}, 62(2), May 2015.

\bibitem[BMW18]{BravermanMW18}
Mark Braverman, Jieming Mao, and S.~Matthew Weinberg.
\newblock On simultaneous two-player combinatorial auctions.
\newblock In {\em Proceedings of the Twenty-Ninth Annual {ACM-SIAM} Symposium
  on Discrete Algorithms, {SODA} 2018, New Orleans, LA, USA, January 7-10,
  2018}, pages 2256--2273, 2018.

\bibitem[DN11]{DobzinskiN11}
Shahar Dobzinski and Noam Nisan.
\newblock Limitations of vcg-based mechanisms.
\newblock {\em Combinatorica}, 31(4):379--396, 2011.

\bibitem[DNS10]{DobzinskiNS10}
Shahar Dobzinski, Noam Nisan, and Michael Schapira.
\newblock Approximation algorithms for combinatorial auctions with
  complement-free bidders.
\newblock {\em Math. Oper. Res.}, 35(1):1--13, 2010.

\bibitem[Dob07]{Dobzinski07}
Shahar Dobzinski.
\newblock Two randomized mechanisms for combinatorial auctions.
\newblock In {\em Proceedings of the 10th International Workshop on
  Approximation and the 11th International Workshop on Randomization, and
  Combinatorial Optimization. Algorithms and Techniques}, pages 89--103, 2007.

\bibitem[Dob16]{Dobzinski16b}
Shahar Dobzinski.
\newblock Computational efficiency requires simple taxation.
\newblock In {\em FOCS}, 2016.

\bibitem[DR20]{DobzinskiR21}
Shahar Dobzinski and Shiri Ron.
\newblock The communication complexity of payment computation.
\newblock {\em Manuscript}, 2020.

\bibitem[DS06]{DobzinskiS06}
Shahar Dobzinski and Michael Schapira.
\newblock An improved approximation algorithm for combinatorial auctions with
  submodular bidders.
\newblock In {\em Proceedings of the Seventeenth Annual ACM-SIAM Symposium on
  Discrete Algorithm}, SODA '06, pages 1064--1073, Philadelphia, PA, USA, 2006.
  Society for Industrial and Applied Mathematics.

\bibitem[DSS15]{DanielySS15}
Amit Daniely, Michael Schapira, and Gal Shahaf.
\newblock Inapproximability of truthful mechanisms via generalizations of the
  {VC} dimension.
\newblock In {\em Proceedings of the Forty-Seventh Annual {ACM} on Symposium on
  Theory of Computing, {STOC} 2015, Portland, OR, USA, June 14-17, 2015}, pages
  401--408, 2015.

\bibitem[DV13]{DobzinskiV13}
Shahar Dobzinski and Jan Vondr{\'{a}}k.
\newblock Communication complexity of combinatorial auctions with submodular
  valuations.
\newblock In Sanjeev Khanna, editor, {\em Proceedings of the Twenty-Fourth
  Annual {ACM-SIAM} Symposium on Discrete Algorithms, {SODA} 2013, New Orleans,
  Louisiana, USA, January 6-8, 2013}, pages 1205--1215. {SIAM}, 2013.

\bibitem[EFN{\etalchar{+}}19]{EzraFNTW19}
Tomer Ezra, Michal Feldman, Eric Neyman, Inbal Talgam-Cohen, and S.~Matthew
  Weinberg.
\newblock Settling the communication complexity of combinatorial auctions with
  two subadditive buyers.
\newblock In {\em the 60th Annual IEEE Symposium on Foundations of Computer
  Science (FOCS)}, 2019.

\bibitem[Fei09]{Feige09}
Uriel Feige.
\newblock On maximizing welfare when utility functions are subadditive.
\newblock {\em {SIAM} J. Comput.}, 39(1):122--142, 2009.

\bibitem[FS09]{FadelS09}
Ronald Fadel and Ilya Segal.
\newblock The communication cost of selfishness.
\newblock {\em Journal of Economic Theory}, 144(5):1895--1920, 2009.

\bibitem[FV10]{FeigeV10}
Uriel Feige and Jan Vondr{\'{a}}k.
\newblock The submodular welfare problem with demand queries.
\newblock {\em Theory of Computing}, 6(1):247--290, 2010.

\bibitem[KN97]{KushilevitzN97}
Eyal Kushilevitz and Noam Nisan.
\newblock {\em Communication complexity}.
\newblock Cambridge University Press, 1997.

\bibitem[Li17]{Li17}
Shengwu Li.
\newblock Obviously strategy-proof mechanisms.
\newblock {\em American Economic Review}, 107(11):3257--87, 2017.

\bibitem[LMN03]{LaviMN03}
Ron Lavi, Ahuva Mu'alem, and Noam Nisan.
\newblock Towards a characterization of truthful combinatorial auctions.
\newblock In {\em 44th Symposium on Foundations of Computer Science {(FOCS}
  2003), 11-14 October 2003, Cambridge, MA, USA, Proceedings}, pages 574--583,
  2003.

\bibitem[LOS02]{LehmannOS02}
Daniel Lehmann, Liadan O'Callaghan, and Yoav Shoham.
\newblock Truth revelation in approximately efficient combinatorial auctions.
\newblock {\em J. {ACM}}, 49(5):577--602, 2002.

\bibitem[LS05]{LaviS05}
Ron Lavi and Chaitanya Swamy.
\newblock Truthful and near-optimal mechanism design via linear programming.
\newblock In {\em Proceedings of the 46th Annual IEEE Symposium on Foundations
  of Computer Science (FOCS)}, 2005.

\bibitem[NS06]{NisanS06}
Noam Nisan and Ilya Segal.
\newblock The communication requirements of efficient allocations and
  supporting prices.
\newblock {\em J. Economic Theory}, 129(1):192--224, 2006.

\bibitem[PSS08]{PapadimitriouSS08}
Christos~H. Papadimitriou, Michael Schapira, and Yaron Singer.
\newblock On the hardness of being truthful.
\newblock In {\em Proceedings of the 49th Annual IEEE Symposium on Foundations
  of Computer Science (FOCS)}, 2008.

\bibitem[PT19]{PyciaT19}
Marek Pycia and Peter Troyan.
\newblock Obvious dominance and random priority.
\newblock In {\em Proceedings of the 2019 {ACM} Conference on Economics and
  Computation, {EC} 2019, Phoenix, AZ, USA, June 24-28, 2019.}, page~1, 2019.

\end{thebibliography}

\appendix

\section{Formal Definitions and Preliminary Analysis}
\label{app:FormalModel}

\paragraph{Environments and Implementations.}

An \emph{environment} for a set of $n$ \emph{players} $i=1,\ldots,n$
is a tuple $E = (Y, \T_1,\ldots, \T_n, u_1,\ldots, u_n)$.
Here, $Y$ is the set of \emph{outcomes}
and each $\T_i$ is the set of \emph{types} of player $i$.
Each $u_i : \T_i \times Y \to \R$ is the \emph{utility function}
of player $i$.
We say that a \emph{type} $t_i \in \T_i$ has \emph{utility}
$u_i(t_i, a) \in \R$ for an outcome $a\in Y$.
A \emph{social choice function} $f$ over $E$ is a mapping
$\T_1 \times\ldots\times \T_n \to Y$.
We restrict attention to deterministic social choice functions.

We say that a social choice function $f : \T_1 \times\ldots\times \T_n \to Y$
is \emph{incentive compatible} (or \emph{implementable}) (without transfers) if
for any $i$, $t_1\in\T_1,\ldots,t_n\in\T_n$, and $t_i'\in\T_i$, we have
\[ u_i(t_i, f(t_i,t_{-i}))
  \ge u_i(t_i, f(t_i',t_{-i})). \]
That is, each agent (weakly) maximizes their utility by reporting
their true type, regardless of the types of other agents.

Our paper works with the paradigm of
\emph{monetary transfers and quasilinear utilities}.
Unlike many prior papers, we make the distinction between environments with
transfers and without transfers explicit.
For any environment $E = (Y, \T_1,\ldots, \T_n, u_1,\ldots, u_n)$,
the corresponding \emph{quasilinear environment with transfers} is
$E' = (Y \times \R^n, \T_1,\ldots, \T_n, u_1',\ldots, u_n')$,
where $u_i'(t_i, (y, p_1, \ldots, p_n)) = u_i(t_i, y) + p_i$.
That is, the quasilinear environment adds \emph{transfers}
$p_1,\ldots, p_n$ to each agent, and the agents quasilinear
utility is the sum of its utility for the outcome and the transfer.
In this context, we call $u_i(t_i, a)$ the \emph{value} agent $i$ gets
(in order to distinguish it from agent $i$'s utility of
$u_i(t_i, a) + p_i$).
We say that a social choice function $f'$ over $E'$ \emph{computes}
a social choice function $f$ over $E$
if $f'$ satisfies $f'(t_1,\ldots,t_n) = ( f(t_1,\ldots,t_n) , p_1,\ldots, p_n)$
for each $(t_1,\ldots,t_n) \in \T_1,\ldots,\T_n$.
(That is, the function must agree on $Y$, but can be arbitrary
on the transfers.)

We treat transfers primarily as a tool for encouraging truthful behavior in
mechanisms\footnote{
  Alternative paradigms include studying ``budget balanced'' mechanisms or
  mechanisms that maximize revenue.
}.
Thus, a social choice function $f : \T_1\times\ldots\times \T_n \to Y$
over an environment $E$ is \emph{incentive compatible} (or
\emph{implementable}) (with transfers)
if there exists a function $f' : \T_1\times\ldots\times \T_n \to Y\times \R^n$
which computes $f$ and is incentive compatible in the 
quasilinear environment $E'$.
This is equivalent to the existence of $n$
transfer functions\footnote{ We do not make any assumptions
  (such as ``no positive transfers'' or ``individual rationality'')
  on the transfers the mechanism is allowed to use.
  This makes our imposibility results only stronger.
}
$p_1,\ldots,p_n : \T_1\times\ldots\times \T_n \to \R$
such that 
\[ u_i(t_i, f(t_i,t_{-i})) + p_i(t_i,t_{-i})
  \ge u_i(t_i, f(t_i',t_{-i})) + p_i(t_i',t_{-i}). \]
In this case, we say the transfer functions $(p_i)_{i=1,\ldots,n}$
\emph{incentivize} social choice function $f$.

A \emph{mechanism} $\M = (G, S_1,\ldots,S_n)$ over an environment
$E = (Y, \T_1,\ldots,\T_n, u_1,\ldots,u_n)$
consists of
\begin{itemize}
  \item An extensive form game $G$ for $n$ players
    with perfect recall and consequences in $Y$ (defined below).
  \item A type-strategy $S_i$ for each player $i$,
    which maps types $\T_i$ to (behavioural) strategies $s_i$
    of player $i$ in game $G$ (defined below).
\end{itemize}
Mechanism $\M$ over $E$
\emph{computes (without transfers)} a social choice
function $f$ over $E$ if we have $G( S_1(t_1),\ldots,S_n(t_n) ) = f(t_1,\ldots,t_n)$
for each profile of types $t_1,\ldots,t_n\in\T_1\times\ldots\times\T_n$.
A mechanism $\M$ over $E'$ \emph{computes (with transfers)} a social choice
$f$ over $E$ if $\M$ computes $f'$, for some $f'$ computing $f$.

We now present our incentive compatibility notions for mechanisms,
both with and without transfers
(recall that, if the mechanism has transfers, then
$u_i(t_i,(y,p_1,\ldots,p_n))$ denotes the quasilinear utility
$u_i(t_i, y) + p_i$).
An implementation $\M$ is \emph{ex-post Nash incentive compatible} (EPIC) if,
for any $i$, any types $t_1\in\T_1,\ldots,t_n\in\T_n$, and any behavioral strategy
$s_i'$ of player $i$, we have
(letting $S_{-i}(t_{-i}) = ( S_j(t_j) )_{j\ne i}$):
\[ u_i(t_i, G(S_i(t_i),S_{-i}(t_{-i})))
\ge u_i(t_i, G(s_i',S_{-i}(t_{-i}))). \]
An implementation is \emph{dominant strategy incentive compatible} (DSIC)
if, for any $i$, type $t_i$ of player $i$,
behavioral strategies $s_i'$ and $s_{-i}$ of all players,
\[ u_i(t_i, G(S_i(t_i),s_{-i}))
\ge u_i(t_i, G(s_i',s_{-i})). \]
Thus, ex-post Nash implementations are weaker, as they only require
$S_i(t_i)$ to be a best response when other agents are playing
strategies consistent with some $S_{-i}(t_{-i})$.

A \emph{direct revelation mechanism} is one in which
each agent is asked to simultaneously reveal their type to the mechanism,
and then the outcome is computed.
In such a mechanism, every possible strategy corresponds to some type,
and thus the mechanism is EPIC if and only if it is DSIC.

\paragraph{Extensive Form Games.}


A \emph{deterministic extensive form game with perfect recall
and consequences in $Y$} (hereafter called a \emph{game}) is a tuple
$G = (H, E, \mathcal{P}, A, \mathcal{A}, (\I_i)_{i\in [n]}, g)$ such that
\begin{itemize}
  \item $H$ is a set of states (also called nodes),
    and $E$ is a set of directed edges
    between the states, such that $(H, E)$ forms a finite directed tree
    (where every edge points away from the root).
    We denote typical elements of $H$ by $h$\footnote{
      This follows from the economics
      convention of identifying states with ``histories'', that is,
      the (unique) sequence of actions taken to arive in a certain node.
      We describe the game more concretely in terms of nodes of a tree
      because in some arguments we need to directly manipulate and change
      the game tree, which can alter these histories.
    }.
    Define $Z$ as the set of leafs, and let the root be called $h_0$.
    Moreover, define $\sigma_E : H \to 2^E$ such that
    $\sigma_E(h)$ is the set of edges leading out of state $h$ (that is,
    leading to successor nodes), and
    $\sigma_H : H\to 2^H$ such that $\sigma(h)$ is the set of states which
    are immediate successors of $h$ in the game tree.
    We write edges like $(h,h')\in E$, where $h$ is between the root and $h'$.
  \item $\mathcal{P} : H \to \{1, \ldots, n\}$ is the player
    choice function, which labels each non-leaf node in $H\setminus Z$ with
    the player who acts at that node.
    Define $H_i := \{ s \in S | \mathcal{P}(s)=i \}$ as the set of states where player
    $i$ is called to act.
    We assume that no player takes a consecutive turn,
    that is, for any $h$ and $h'\in \sigma_H(h)$, we have
    $\mathcal{P}(h)\ne\mathcal{P}(h')$\footnote{
      Note that this assumption is without loss of generality,
      because if player $i$ moves in two consecutive nodes $h$, $h'$, we
      could modify the game tree to put a trivial node (with a single
      action) of another player $j\ne i$ between $h$ and $h'$.
    }.
  \item $A$ is the set of actions, and $\mathcal{A} : E \to A$
    labels each edge with an action.
    $\mathcal{A}$ must be injective on each set $\sigma_E(h)\subseteq E$
    (that is, two edges below the same node cannot be labeled with
    the same action).
    Define $A : H \to 2^A$ 
    such that $A(h)$ is the set of
    actions available at state $h$
    (that is, $\{ \mathcal{A}(e) | e \in \sigma_E(h) \}$).
  \item For each player $i$, the set $\I_i$ (the \emph{information partition})
    is a partition of $H_i$ (the set of states where $i$ is called to act)
    such that for every set $I_i \in \I_i$,
    the set of actions available to $i$ are exactly the same at every set
    in $I_i$ (that is, $A(s) = A(s')$ for each $s, s'\in I_i$).
    The elements $I_i\in \I_i$ are called the \emph{information sets} of
    player $i$.
    Abusing notation slightly, we denote $A : \I_i \to A$ 
    such that $A(I_i) = A(s)$ for any $s\in N$.
    When a player acts in $G$, all that player knows
    is which information set they are in.
    The information sets must satisfy the following (known as the ``perfect
    recall'' assumptions\footnote{
      These assumptions just mean that the mechanism is not able to force agents
      to forget things they knew in the past of the game.
      While some of the game theory literature relaxes this assumption, we
      do not consider games without perfect recall here.
      This is in accordance with our adversarial model:
      we consider agents who know the already know the entire game tree in
      advance (although when they actually act in the mechanism, they only
      know what the mechanism tells them, that is, which information set
      they are in).
    }):
    \begin{itemize}
      \item Any path from the root to a leaf can cross a
        specific information set only once.
        That is, for any path $p = (h_0,h_1,\ldots,\allowbreak h_k)$ from the
        root to a leaf, we never have $h_i, h_j \in I_i \in \I_i$
        for $i\ne j$.
      \item If two nodes are in the same information set of player $i$,
        then $i$'s experience in reaching those nodes must be identical.
        More specifically, for node $h\in H_i$,
        we define $\psi_i(p)$, the \emph{experience of player $i$
        reaching $h$} as follows: take the (unique) path
        $p = (h_0, h_1, \ldots, h_k = h)$ from the root to $h$,
        and for each $j \in [k]$ at which $\mathcal{P}(h_j) = i$,
        write $(I_i^j, a^j)$ in order,
        where $I_i^j\ni h_j$ is the information set containing $h_j$
        and $a^j$ is the (unique) action which player $i$ takes at $h_j$
        to move the game to $h_{j+1}$.
        So $\psi_i(p)$ is an ordered, alternating list of information sets
        and actions $i$ takes at those information sets.
        We must have $\psi_i(h) = \psi_i(h')$ for any two histories in the
        same information set ($h, h'\in I_i \in \I_i$).
    \end{itemize}
  \item $g : Z \to Y$ labels each leaf node with an outcome from $Y$.
\end{itemize}


A \emph{(behavioral) strategy} $s_i$ of player $i$ is a function
$\I_i \to A$, such that $s_i(I_i) \in A(I_i)$ is an action available to player
$i$ at information set $I_i\in \I_i$.
The result of the mechanism under a behavioural strategy profile
$(s_1,\ldots,s_n)$ is the outcome in $Y$ in the leaf node which you arrive at
by iteratively following the action selected by each $s_i$.
We write this as $G(s_1,\ldots,s_n)$.
That is, if $(h_0, h_1, \ldots, h_k)$ is the path from the root
to a leaf such that each edge $(h_j, h_{j+1})$ is the unique edge
such that $\mathcal{A}(h_j, h_{j+1}) = s_i(I_i)$,
for $i = \mathcal{P}(h_j)$ and $h_j \in I_i \in \I_i$,
then we set $G(s_1,\ldots,s_n) = g(h_k)$.

A game is \emph{perfect information} if every information set is a
singleton. Observe that perfect information games cannot hide information
from players or allow more than one player to move simultaneously.

For a state $h\in H$, define the communication cost of $h$
$\lceil \log |A(h)| \rceil$, i.e. the number of bits needed for the agent
acting at $h$ to communicate their choice of action.
Define the communication cost of game $G$ as the maximum sum of the
communication costs of nodes on a path from the root to a leaf node
in $G$\footnote{
  Some prior works~\cite{FadelS09} limit mechanisms to at most two
  actions per node and define the communication cost as the maximum depth
  of the tree. This is equivalent to our definition up to constants, and
  our definition allows us to assume without loss of generality that no
  agent takes consecutive turns in the game.

  As is standard in the literature, we do not count
  the communication which the mechanism must send to the agents
  (to tell them which information set they are in).
  In perfect information games, one can argue that this is because the
  mechanism is simply run over a public communication channel.
  This does not apply in games of partial information.
  However, we note that
  counting the communication the mechanism would need to tell players their
  information set can increase the communication used
  by at most a quadratic factor. This is because at worst the mechanism
  needs to repeat to agent $i$ the messages of all agents which acted
  before the current $i$ node.
}.

Consider an environment $E = (Y, \T_1,\ldots, \T_n, u_1,\ldots, u_n)$
and an extensive form game $G$ with consequences in $Y$.
As we mentioned above, a \emph{type-strategy} $S_i$ is a mapping from $\T_i$ to 
behavioural strategies of player $i$ in game $G$.
Equivalently, a type-strategy is any function $S_i : \T_i \times \I_i \to A$
such that $s_i(t_i, I_i) \in A(I_i)$ for each $I_i\in\I_i$.
We let $S_i(t_i)$ denote the entire behavioural strategy.
For clarity, we capitalize type-strategies.
We typically refer to behavioural strategies simply as ``strategies'', and
specify explicitly when $S_i$ is a type-strategy.

\paragraph{Notation.}
When describing games between two players, Alice and Bob, we often use the
terms ``Alice node'' for a node where Alice acts.
We also denote such an $h$ with $\mathcal{P}(h)=A$.
Similarly, Bob nodes have $\mathcal{P}(h)=B$.

In auction-like domains, we typically identify the type with a valuation
function over the bundles of items received by a player.
For example, when the allowable types are some sets of functions from subsets
of $M$, and the allowable outcomes are partitions of the items to the $n$
players, we formally have $u_i(v_i, (A_1,\ldots,A_n)) = v_i(A_i)$.
Thus, we often write $v_i(A_i)$ in place of the entire outcome
$(A_1,\ldots,A_n)$.

For a player $i$ with type $t$, we say that $S_i(t)$ is the
``truth-telling'' strategy of player $i$, and the action
$S_i(t)(h)$ is the ``truth telling action'' at node $h$.
A strategy is a \emph{best response} to strategies $(s_j)_{j\ne i}$
for player $i$ with type $t$ if the strategy maximizes player $i$'s
utility across all possible strategies of player $i$.

\subsection{Describing the Game Tree via Sets of Types}
\label{sec:TreeTypes}

We now give some ways to regularize and describe EPIC mechanisms in
a natural way in terms of the types of agents.
Most of this language has been considered before (see
e.g.~\cite{FadelS09, BadeG17}).

Let $(G, S_1,\ldots,S_n)$ be a deterministic mechanism EPIC implementing $f$ over
environment $E = (Y, \T_1,\ldots,\allowbreak\T_n,u_1,\ldots,u_n)$.
For each node $h\in H$, let $\T(h) \subseteq \T_1\times\ldots\times\T_n$
denote the set of types $(t_1,\ldots,t_n)$ such that,
when agents play strategies $(S_1(t_1),\ldots,S_n(t_n))$, the computation
of $G(S_1(t_1),\ldots,S_n(t_n))$ enters state $h$
(that is, $h$ is on the path from the root to a leaf taken when computing
$G(S_1(t_1),\ldots,S_n(t_n))$).

\begin{lemma}
  \label{lem:typeSets}
  Each set $\T(h)$ is a rectangle. That is, $\T(h) =
  \T_1(h)\times\ldots\times\T_n(h)$ for some sets $\T_1(h),\ldots,\T_n(h)$.
  Moreover, if $\mathcal{P}(h)=i$
  then $\{\T_i(h')\}_{h'\in\sigma_H(h)}$ is a partition of $\T_i(h)$.
  If $\mathcal{P}(h)\ne i$
  then $\T_i(h') = \T_i(h)$ for all ${h'\in\sigma_H(h)}$.
\end{lemma}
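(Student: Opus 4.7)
The plan is to prove \autoref{lem:typeSets} by induction on the depth $d$ of the node $h$ in the game tree $G$, maintaining the invariant that $\T(h)$ is a rectangle $\T_1(h)\times\cdots\times\T_n(h)$.

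For the base case, when $h=h_0$ is the root, every computation $G(S_1(t_1),\ldots,S_n(t_n))$ passes through $h_0$, so $\T(h_0) = \T_1\times\cdots\times\T_n$, which is trivially a rectangle. This also vacuously confirms the partition claim, as the root has no parent yet.

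For the inductive step, consider a node $h$ at depth $d$ with $\T(h) = \T_1(h)\times\cdots\times\T_n(h)$, and let $h'\in\sigma_H(h)$ be a child of $h$ reached via edge $e=(h,h')$ with action $a = \mathcal{A}(e)$. Let $i = \mathcal{P}(h)$ be the player acting at $h$, and let $I_i \in \I_i$ be the unique information set containing $h$. A type profile $(t_1,\ldots,t_n)$ has its computation pass through $h'$ if and only if (a) it passes through $h$, i.e., $(t_1,\ldots,t_n)\in \T(h)$, and (b) the action taken at $h$ is $a$, i.e., $S_i(t_i)(I_i) = a$. Crucially, condition (b) depends only on $t_i$ (and not on $t_{-i}$), because the type-strategy $S_i$ takes only $t_i$ as input and the information set $I_i$ is determined by $h$ alone. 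Therefore, defining $\T_i(h') := \{t_i \in \T_i(h) : S_i(t_i)(I_i) = a\}$ and $\T_j(h') := \T_j(h)$ for $j\ne i$, we obtain
\[
  \T(h') = \T_1(h')\times\cdots\times\T_n(h'),
\]
which is a rectangle.

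For the partition claim at $h$, observe that for each $t_i \in \T_i(h)$, the behavioral strategy $S_i(t_i)$ prescribes exactly one action $S_i(t_i)(I_i) \in A(I_i)$ at $h$. Since the edges $\sigma_E(h)$ are labeled injectively by actions, this means $t_i$ lies in exactly one $\T_i(h')$ for $h'\in\sigma_H(h)$. Hence $\{\T_i(h')\}_{h'\in\sigma_H(h)}$ is a partition of $\T_i(h)$. The case $\mathcal{P}(h)\ne i$ is immediate from the construction above, since then $\T_i(h') = \T_i(h)$. I do not anticipate a real obstacle here; the only subtlety is being careful that $S_i(t_i)$'s action at $h$ depends only on $t_i$ and on the information set $I_i \ni h$, which is built into the definition of a type-strategy, and that this in turn relies on the game being well-defined (information sets partition $H_i$ and edges below a node carry distinct action labels).
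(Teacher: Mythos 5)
Your proof is correct, and it takes a somewhat different (arguably cleaner) route than the paper's. The paper uses the standard ``rectangle argument'' from communication complexity: it fixes a node $h$, takes two profiles $t,t'\in\T(h)$, and shows directly that the mixed profile $(t_i',t_{-i})$ also lies in $\T(h)$ by tracing the unique path from the root to $h$ and noting that under $(t_i', t_{-i})$, each player on that path takes exactly the same actions they take under either $t$ or $t'$. You instead prove the rectangle structure by induction on the depth of $h$, explicitly constructing the factor sets $\T_j(h')$ at each child $h'$ of $h$ from the factor sets at $h$, and observing that the one constraint that changes (the acting player $i$'s chosen action at $h$) depends on $t_i$ alone. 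Both proofs rest on the same underlying fact --- the action taken at a node is a function of the acting player's type only --- but yours is constructive and defines the $\T_i(\cdot)$'s along the way, while the paper's establishes the rectangle property first and only then lets the factor sets be ``well defined.'' The inductive framing also makes the partition claim essentially immediate, whereas the paper proves it in a separate (short) step. Both arguments are sound; yours would be a reasonable substitute.
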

\begin{proof}
  One can apply a standard rectangle argument to $\T(h)$.
  Specifically, suppose that $t = (t_1,\ldots,t_n),\allowbreak
  (t_1',\ldots,t_n') \in \T(h)$, and consider some player $i$ and
  type profile $(t_i', t_{-i})$.
  Every player other than $i$ takes the same action under $t$ and 
  $(t_i', t_{-i})$.
  Because the actions taken by each player along the path from the root to
  $h$ are unique, $S_i(t_i')$ must play the same actions along this path as
  does $S_i(t_i)$. Thus, $i$ will take the same action under $t_i$ and
  $t_i'$ at every node along the path where $i$ is called to act.
  So $(t_i', t_{-i})\in \T(h)$ as well.
  Applying this for each player $i=1,\ldots,n$ proves that $\T(h)$ is a
  rectangle. This makes $\T_1(h),\ldots,\T_n(h)$ well defined for each node
  $h$.

  If a player does not act at node $h$, then each successor of node $h$
  keeps the type set of that player the same, by definition.
  On the other hand, when player $i$ acts at $h$, every type in $\T_i(h)$
  takes exactly one action at $h$. This proves $\T_i(h')$ partitions
  $\T_i(h)$ as we let $h' \in \sigma_H(h)$ vary.
\end{proof}

Without loss of generality, we may assume 
that every node $h$ of the game has $\T(h)\ne \emptyset$\footnote{
  Removing nodes for which this is true can only decrease the communication
  complexity, while still computing the correct result in dominant
  strategies (if the original mechanism did so).
}.
Under this assumption, to specify an EPIC mechanism, it suffices to
specify a tree equipped with type sets $\T_i(h)$ satisfying the conclusions
of lemma~\ref{lem:typeSets}. In~\ref{app:simplifying}, 
we often describe modifying extensive form games in these terms.
%


\section{Removing Additional Assumptions}
\label{app:simplifying}

%

The purpose of this appendix is to establish
two lemmas which restrict the structure of the mechanism we need to
consider. 
This first lemma apples in any two player (call them Alice and Bob)
social choice function where one player (say, Bob)
always receives the same utility in any outcome.
The two transformation are, intuitively,
1) telling Alice everything the mechanism knows (instead of using
information sets for Alice), and 2) deferring all questions which the
mechanisms asks Alice, but does not tell Bob, until the point in which the
mechanism actually does tell Bob.
For general mechanisms, both of these transformations may destroy the DSIC
property of the mechanism. However, they may only do so for \emph{Bob},
because both transformations only serve to enrich the strategy space of
Alice (because both transformations allow Alice to condition her actions on
more information revealed by Bob).
Luckily, because Bob can always guarantee himself the same utility
(in any mechanism correctly implementing the social choice function),
we can essentially ignore his incentives. On the other hand, for Alice, the
strategy space has only gotten richer, but if her strategy were truly
\emph{dominant} before, then it will remain dominant if she is allowed to
react to more information revealed by Bob.

\begin{lemma}
  \label{lem:perfectInformation}
  Suppose there is a social choice function $f$ for some two player environment.
  Call the players Alice and Bob, and consider some mechanism $\M$ DSIC
  implementing $f$ (with or without transfers) with communication cost $C$.
  Assume that when Bob plays his truth-telling strategy, he gets constant 
  utility $U$ in every outcome selected by $\M$. 
  Then there exists a perfect information mechanism $\M'$ DSIC
  implementing $f$ (with the same transfers as $\M$, if there are any)
  with communication cost at most $2C$.
\end{lemma}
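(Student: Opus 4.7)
My plan is to transform $\M$ into a perfect-information mechanism $\M'$ in two passes: the first eliminates Alice's non-singleton information sets, the second Bob's. The constant-utility assumption for Bob is crucial, since it trivializes his incentive constraints and lets us restructure his moves freely.

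In the first pass, I would define $\M^{(1)}$ by splitting each Alice information set in $G$ into singletons, keeping the tree structure, action sets, and Bob's information sets unchanged; set $S_A^{(1)}(t_A)(h) := S_A(t_A)(I_A(h))$, where $I_A(h)$ is Alice's original info set containing $h$. The communication cost is unchanged. To verify Alice's DSIC, I would use a projection argument: for any Alice deviation $s_A^{(1)}$ in $G^{(1)}$ and any Bob strategy $s_B$ (whose strategy space is unchanged, $\Sigma_B^G = \Sigma_B^{G^{(1)}}$), the trajectory of $(s_A^{(1)}, s_B)$ visits each original Alice info set at most once by perfect recall, so I can define $\tilde{s}_A \in \Sigma_A^G$ matching $s_A^{(1)}$ at the unique visited node of each info set. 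The trajectories in $G^{(1)}$ and $G$ then coincide, and DSIC of $\M$ applied to $\tilde{s}_A$ yields the required inequality. Bob's DSIC follows directly from the constant-utility assumption, since $U$ is his maximal achievable utility.

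In the second pass, a naive refinement of Bob's information sets would expand his strategy space and can destroy Alice's DSIC — a small example with one hidden Alice action followed by a Bob response confirms this, as Alice loses her dominant strategy when Bob can condition on her move. Instead, my plan is to reorder Bob's hidden moves earlier in the game: for each non-singleton Bob information set $I_B$ in $\M^{(1)}$, I would insert a Bob commitment step at the nearest ancestor preceding the Alice actions that distinguish nodes in $I_B$. Bob commits to his action in $A(I_B)$ in advance; when play later reaches the original $I_B$ location, the committed action is substituted without further communication. Alice's new type-strategy would play, at each node, a best response to Bob's committed actions conditioned on her type. On the truth-telling path, Bob's commitment matches $S_B(t_B)(I_B)$ and Alice's best response is $S_A^{(1)}(t_A)$, so $\M'$ still computes $f$; Alice's best-response strategy is dominant by construction, and Bob's DSIC continues to follow from constant utility.

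The main obstacle I anticipate is the communication-cost bound. Relocating Bob's commitments upward may require duplicating them across branches of Alice's subsequent actions, which is where the factor-of-two blowup arises. I expect to close this with a per-path charging argument: each bit Bob communicates in $\M'$ along a given root-to-leaf path corresponds to a bit he would have communicated in $\M^{(1)}$ along that same path, just relocated within the tree, so the communication contributed by Bob on any path is unchanged; combined with Alice's unchanged contribution, the total path length is at most $2C$. Formalizing the tree-restructuring operation carefully — especially in games where Bob has many nested hidden information sets interleaved with Alice's moves — will be the most delicate part.
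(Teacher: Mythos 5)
Your first pass (shattering Alice's information sets to singletons while leaving Bob's unchanged, then projecting any deviation in the new game back to the old game along the realized path) is essentially identical to the paper's first transformation, and is correct.

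Your second pass takes a genuinely different route, and it has two gaps that I believe are fatal as stated. The paper eliminates Bob's information sets by \emph{deferring Alice's revelation}: at each Alice node $h$, actions leading to nodes in the same Bob information set are bundled into a single action, Bob then moves at a singleton node, and Alice re-reveals the bundled bits together with her next move. This never gives either player information they did not have, so Alice's deviations in the transformed game still project onto the original game (given a fixed Bob strategy, Bob's move is oblivious to Alice's bundled bits), and each Alice action is re-communicated at most once, giving the factor $2$. You instead propose \emph{moving Bob's commitments earlier} to the nearest common ancestor preceding the distinguishing Alice actions. The concerns:

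\textbf{The communication charging argument breaks.} Consider the mechanism where Alice and Bob alternately reveal one bit per round for $C/2$ rounds and Bob never observes Alice's bits. At round $t$ Bob's information set is indexed by $(b_1,\ldots,b_{t-1})$, and the Alice actions distinguishing its member nodes go back to $a_1$ at the root. So \emph{every} non-singleton Bob information set -- there are $\Theta(2^{C/2})$ of them -- has the root as the insertion point. Bob must pre-commit his action for each of them there, since which one will actually be reached depends on Alice's as-yet-unplayed bits. That single root-to-leaf prefix then carries $\Theta(2^{C/2})$ bits of Bob communication, an exponential blowup, not a factor of $2$. Your per-path charging assumes a bit relocated upward is still charged to the same path; that fails once many information sets' insertion points collapse to a common ancestor, and there is no obvious way to make the commitments ``lazy'' while keeping the game perfect-information.

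\textbf{Alice's redefined strategy is neither well-defined nor dominant ``by construction.''} After your transformation Alice observes some but not all of Bob's eventual moves when she acts (Bob may still have singleton information sets later in the tree). ``Best response to Bob's committed actions'' is therefore underdetermined -- it depends on what Alice expects Bob to do at his remaining nodes. Even where it is well-defined, a sequentially rational best response to observed commitments is not automatically a dominant strategy against arbitrary continuations. The cleaner move, consistent with the logic of your first pass, would be to keep Alice playing the \emph{same} type-strategy (ignoring Bob's commitments) and argue via projection that any new deviation of Alice, against any fixed $s_B$, is simulable in $\M^{(1)}$ because Bob's committed action is fixed once $s_B$ is. That salvages the incentive argument, but not the communication bound, which is where the construction ultimately fails.
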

\begin{proof}
  First, observe that for any such $\M$,
  truth-telling is always a dominant strategy for Bob.
  This is easiest to see as follows: $\M$ is, in particular, EPIC,
  so each node $h$ corresponds to a set of types $\T_A(h), \T_B(h)$ of Alice
  and Bob. Recall that without loss of generality, we assume that no type
  set is empty. Regardless of the strategy of Alice, if Bob acts according to
  his true type $t_B$, then the game will terminate in a leaf containing
  types $(t_A, t_B)$ for some Alice type $t_A$.
  The result will then be $f(t_A, t_B)$, which gets Bob utility $U$ by
  assumption. This is also the highest utility Bob can possibly achieve.


  Now, we describe a transformation from $\M_0 = \M$ to a perfect information
  mechanism $\M'$. This transformation consists of two major steps, the
  first eliminating information sets of Alice, and the second eliminating
  information sets for Bob.
  First, construct a new mechanism $\M_1$ which is identical to $\M_0$,
  except every information set of Alice is a singleton
  (i.e. Alice knows all information available to the mechanism).
  More precisely, simply replace Alice's information partition $\I_A$ with
  the trivial one, which places every node $h$ with $\mathcal{P}(h)=A$ into its own
  information set $\{h\}$.
  The communication cost of $\M_1$ is identical to that of $\M_0$.
  Moreover, every strategy of Alice $S_A(t_A)$ from $\M_0$ is still a valid
  strategy in $\M_1$, so $\M_1$ still correctly computes the social choice
  function. So we need only argue that the DSIC property is preserved.

  At a high level, we argue that shattering Alice's information sets
  preserves her dominant strategies, and thus can only harm
  the incentives of Bob (but this is a non-issue because Bob can always
  guarantee himself utility $U$).
  More formally, 
  suppose $\M_1$ were not DSIC. Because Bob always gets utility $U$
  under truth telling,
  Alice must be the player with a strategic manipulation.
  That is, there is a strategy
  $s_B$ of Bob in $\M_1$ such that
  $S_A(t_A)$ is not a best response in $\M_1$. Say that $s_A'$ gets higher
  utility for Alice when she has type $t_A$.
  Observe that $s_B$ also constitutes a strategy of Bob in $\M_0$, 
  because Bob's information sets are unchanged.
  Moreover, we can define a strategy $\widetilde s_A$ in $\M_0$ as follows:
  if $(h_0, h_1,\ldots,h_k)$ is the path taken 
  by $s_A', s_B$ in $\M_1$, and let
  $\widetilde s_A$ be any strategy in $\M_0$ such that $\widetilde s_A(I_A) =
  s_A'(h)$ for every $h$ in path $(h_0, h_1,\ldots,h_k)$
  (this is well-defined because any path from the root to a leaf must
  intersect an information set at most once, by the assumption of perfect recall).
  Then $\widetilde s_A, s_B$ achieves the same result in $\M_0$ as $s_A', s_B$
  does in $\M_1$.
  Then $\widetilde s_A$ is a valid strategy in $\M_0$, distinct from $S_A(t_A)$,
  which gives Alice with type $t_A$ higher utility
  that truth-telling when Bob plays $S_B$.
  Thus, $\M_0$ is not DSIC.
  So if $\M_0$ is DSIC, so is $\M_1$.

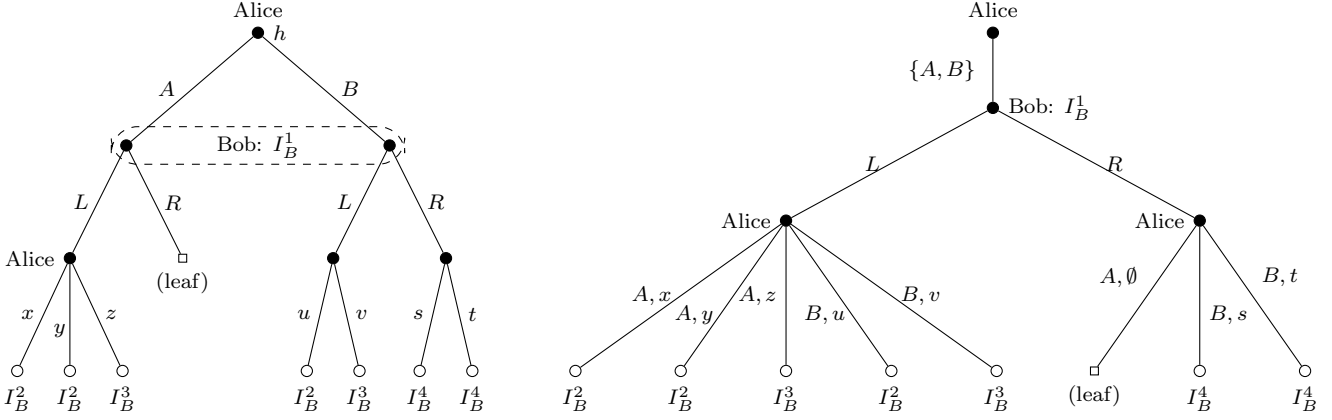
\begin{figure}
\tikzset{
  solid node/.style={circle,draw,inner sep=1.5,fill=black},
  hollow node/.style={circle,draw,inner sep=1.5},
  square node/.style={draw,inner sep=1.5}
}
\hspace{-7mm}
\begin{tikzpicture}[scale=1.0,font=\footnotesize]
  \tikzstyle{level 1}=[level distance=15mm,sibling distance=35mm]
  \tikzstyle{level 2}=[level distance=15mm,sibling distance=15mm]
  \tikzstyle{level 3}=[level distance=15mm,sibling distance=7mm]
\node(0)[solid node,label=above:{Alice},label=right:{$h$}]{}
  child{node(1)[solid node]{}
    child{node[solid node,label=left:{Alice}]{}
      child{node[hollow node, label=below:{$I_B^2$}]{} edge from parent node[left]{$x$}}
      child{node[hollow node, label=below:{$I_B^2$}]{} edge from parent node[below left, xshift=2]{$y$}}
      child{node[hollow node, label=below:{$I_B^3$}]{} edge from parent node[right]{$z$}}
      edge from parent node[left]{$L$}
    }
    child{node[square node,label=below:{(leaf)}]{} 
      edge from parent node[right]{$R$}
    }
    edge from parent node[left,xshift=-3]{$A$}
  }
  child{node(2)[solid node]{}
    child{node[solid node,label=below:{}]{}
      child{node[hollow node, label=below:{$I_B^2$}]{} edge from parent node[left]{$u$}}
      child{node[hollow node, label=below:{$I_B^3$}]{} edge from parent node[right]{$v$}}
      edge from parent node[left]{$L$}
    }
    child{node[solid node,label=below:{}]{}
      child{node[hollow node, label=below:{$I_B^4$}]{} edge from parent node[left]{$s$}}
      child{node[hollow node, label=below:{$I_B^4$}]{} edge from parent node[right]{$t$}}
      edge from parent node[right]{$R$}
    }
    edge from parent node[right,xshift=3]{$B$}
  };
  \draw[dashed,rounded corners=10]($(1) + (-.2,.25)$)rectangle($(2) +(.2,-.25)$);
  \node at ($(1)!.5!(2)$) {Bob: $I_B^1$};
\end{tikzpicture}
\qquad
\begin{tikzpicture}[scale=1.0,font=\footnotesize]
  \tikzstyle{level 1}=[level distance=10mm,sibling distance=35mm]
  \tikzstyle{level 2}=[level distance=15mm,sibling distance=55mm]
  \tikzstyle{level 3}=[level distance=20mm,sibling distance=14mm]
\node(0)[solid node,label=above:{Alice}]{}
  child{node(1)[solid node, label=right:{Bob: $I_B^1$}]{}
    child{node[solid node,label=left:{Alice}]{}
      child{node[hollow node, label=below:{$I_B^2$}]{} edge from parent node[left]{$A, x$}}
      child{node[hollow node, label=below:{$I_B^2$}]{} edge from parent node[below left, xshift=-4]{$A, y$}}
      child{node[hollow node, label=below:{$I_B^3$}]{} edge from parent node[left]{$A, z$}}
      child{node[hollow node, label=below:{$I_B^2$}]{} edge from parent node[below left, xshift=6]{$B, u$}}
      child{node[hollow node, label=below:{$I_B^3$}]{} edge from parent node[right]{$B, v$}}
      edge from parent node[left]{$L$}
    }
    child{node[solid node,label=left:{Alice}]{} 
      child{node[square node, label=below:{(leaf)}]{} edge from parent node[above left]{$A, \emptyset$}}
      child{node[hollow node, label=below:{$I_B^4$}]{} edge from parent node[below right]{$B, s$}}
      child{node[hollow node, label=below:{$I_B^4$}]{} edge from parent node[above right]{$B, t$}}
      edge from parent node[right]{$R$}
    }
    edge from parent node[left,xshift=-3]{$\{A, B\}$}
  } ;
\end{tikzpicture}
\caption{
  The node-by-node transformation needed to eliminate Bob's information sets.
  Specifically, this shows the transformation used for a single information
  set of Bob $I_B^1$ which occurs only in the subtree pictured.
  The hollow nodes in the bottom layer are Bob nodes,
  labeled with their information set, which may have
  large subtrees below them.
  The square node is a leaf.
  Note that the hollow subtrees are in a bijection before and after the
  transformation, and that there is a natural correspondence between
  strategies before and after the transformation.
  While Bob's available strategies are unchanged, Alice's strategies are
  now able to react to more information revealed by Bob.
  However, if her strategy is dominant in the original mechanism, it is
  still dominant after the transformation.
}
\label{fig:CondenseBobsInfoSets}
\end{figure}

  We now describe the second major step\footnote{
    As is common in the theory of extensive form games, the large amount of
    notation involved obscures the main points of this argument.
    See Figure~\ref{fig:CondenseBobsInfoSets} for the main intuition of
    the argument.
  }.
  This is composed of a long series of transformations, one for each node at
  which Alice acts, which proceed from the root of the tree downward.
  Specifically, let $\M_2^0 = \M_1$, and let $L$ denote the maximum number
  of nodes at which Alice acts along any path from the root to a leaf.
  We define
  \[ H_A^j = \{ h \in H_A\ \big|\ \text{there are $j-1$ ancestors $h'$ of $h$ 
    with $\mathcal{P}(h')=A$} \} \]
  That is, $H_A^j$ denotes the set of nodes $h$ at
  which Alice acts and for which there are $j-1$ nodes above $h$ in the
  tree at which Alice acts.
  For $j=1,\ldots,L$, we describe a transformation from $\M_2^{j-1}$ to $\M_2^j$.


  We prove by induction on $j$ that, after our transformation, all
  information sets of Bob are singletons. More formally, for each $j$, our
  inductive hypothesis is that for each $h\in H_A^j$,
  nodes, every ancestor of $h$ (Alice's or Bob's) is in an information set
  with only one element. Note that we must also prove that the game
  is still a valid mechanism (satisfying perfect recall) and still computes
  $f$ in dominant strategies.



  Assume by induction that all
  ancestors of nodes in $H_A^j$ have singleton information sets.
  For a node $h\in H_A^j$, we transforms $\M_2^{j-1}$
  to get a new mechanism $\M_2^{j,h}$.
  Let $A_{h,0}$ denote those actions in $A(h)$ which
  immediately result in a leaf node. Partition the remaining set of actions
  $A(h)\setminus \A_{h, 0}$ into
  $A_{h,1}, \ldots, A_{h,k}$ such that, for all Bob nodes $h'_1, h'_2 \in
  \sigma_H(h)$
  after $h$ in the game tree\footnote{
    Recall that we assume without loss of generality that players alternate
    turns along every path in the game tree.
  }, $h'_1$ and $h'_2$ are in the same information
  set if and only if actions $\mathcal{A}(h,h'_1)$ and
  $\mathcal{A}(h,h'_2)$ are in the same element of the partition.
  That is, partition Alice's actions according to Bob's
  information sets immediately below $h$.
  Let $U_{h, i}\subseteq \sigma_H(h)$ denote the set of Bob nodes
  $h'$ immediately following $h$ such that $\A(h,h')\in A_{h,i}$.
  For each set $A_{h,i}$, replace all of the Bob nodes in $U_{h,i}$ in
  $\M_2^{j-1}$ with a single node, labeled $U_{h,i}$, in $\M_2^{j, h}$.
  By our inductive hypothesis, no other node of Bob is in the same information set
  as these\footnote{
    To see this, use the perfect recall hypothesis. Any other Alice node
    $h'$ in $H_A^j$ must differ from Bob's point of view in some way:
    either it's a sibling of node $h$ (in which case Bob took a different action at the
    node preceding $h$) or there is a Bob node which differs along the path
    from the root to $h'$. By our inductive hypothesis, this different node
    is in a different information set.
    Thus, no descendant of $h'$ could be indistinguishable for Bob either.
  } $h'\in U_{h,i}$, so place $U_{h, i}$ in its own (trivial) information set 
  in $\M_2^{j,h}$.
  Replace all of Alice's actions at $h$ in $A_{h,i}$ with a single one labeled
  $A_{h,i}$\footnote{
    Note that this may create a node at which Alice has only a single
    action. This is fine and does not violate our assumptions,
    but you can remove such a node if you want (and then collapse the
    consecutive Bob nodes in the tree).
  }, which corresponds to an edge directed from $h$ to $U_{h, i}$,
  and modify $S_A$ so that all types which took any action in $A_{h,i}$
  now take action labeled $A_{h,i}$.
  Now, for each $A_{h,i}$, let $B_{h,i}$ denote the (unique) set of actions Bob
  had available in nodes in $U_{h,i}$ in $\M_2^{j-1}$.
  The new Bob node $U_{h, i}$ has exactly this set of actions $B_{h, i}$
  available in $\M_2^{j, h}$, with the same strategy profile $S_B$
  corresponding to them.
  Now, in $\M_2^{j,h}$ immediately below $U_{h,i}$,
  there is now a single Alice node corresponding to
  each action Bob may take in $B_{h,i}$.
  Call this node $h^A_{h, i, b}$ for each $b\in B_{h,i}$.
  We now construct the set of actions Alice has available at $h^A_{h,i,b}$,
  and the set of nodes below $h^A_{h,i,b}$.
  For any node $h''$ in $\M_2^{j-1}$ such that $(h, h')$ and $(h', h'')$
  are edges, where  $h'\in U_{h,i}$ and $\A(h', h'')=b$,
  we create actions at $h^A_{h,i,b}$.
  If $h''$ is a leaf node, we add an action to $h^A_{h,i,b}$
  labeled $(a, \emptyset)$, where $a=\A(h,h')$ is the action Alice takes leading to $h'$.
  Place the leaf node $h''$ below action $(a,\emptyset)$.
  If $h''$ is an Alice node, we add $|A(h'')|$ actions, one for each $a_2\in A(h'')$,
  labeled $(a_1,a_2)$, where $a_1=\A(h,h')$.
  For each such $a_2$, let $h'''$ be the node with $a_2=\A(h'',h''')$,
  and place $h'''$ below $h^A_{h,i,b}$ corresponding to $(a_1,a_2)$.
  
  By construction, nodes ``outside of the transformation'' below $h$
  are now in a bijection in $\M_2^j$ and $\M_2^{j, h}$.
  Specifically, a leaf node along path corresponding to $(a_1, b_1)$, where
  $a_1 \in A_{h,i}$ and $b_1\in B_{h,i}$, now corresponds to action
  sequence $(A_{h,i}, b_1, (a_1, \emptyset) )$.
  Moreover, a node along path $(a_1, b_1, a_2)$, where $a_1 \in A_{h,i}$,
  $b_1\in B_{h,i}$, and $a_2\in A(h'')$ (for whichever $h''$ corresponds
  to following $(a_1, b_1)$)
  now corresponds to sequence of actions $( A_{h,i}, b_1, (a_1, a_2) )$.
  Alice's strategy $S_A$ continues to take actions in a natural way:
  any type which took action $a_1$ at $h$ and $a_2$ at $h''$ will take
  action $A_{h,i}\ni a_1$ at $h$ and $(a_1, a_2)$ at $h''$.
  In the new mechanism $\M_2^{j,h}$, we place the entire subtree from the old mechanism
  $\M_2^{j-1}$ below the corresponding newly created node.
  This completes our description of the transformation at node $h$.
  Observe that all Bob nodes in $\sigma_H(h)$ directly below $h$ in 
  $\M_2^{j,h}$ are now in trivial information sets.

  We must prove that our transformation preserves the correctness of the
  mechanism.
  Because the nodes outside the transformation are in a bijection, and
  strategies $S_A, S_B$ still result in entering the correct subtree
  outside of the transformed nodes, $\M_2^{j,h}$ still implements $f$
  (with the same transfers as $\M_2^{j-1}$, if there are any).
  Moreover, the game still satisfies perfect recall, as we have only
  reduced the information sets used by the mechanism.
  Thus, we need only argue that incentives are preserved.
  That is, we need to formally show that if $\M_2^{j-1}$ is DSIC for Alice, so is
  $\M_2^{j,h}$. 
  At a high level, this holds for the same reason that the DSIC property was
  preserved by our first transformation.
  Namely, we have only enriched the strategy space of Alice, while Bob's
  strategy space is unchanged -- thus, if Alice's strategy were already
  dominant, than it will still be dominant after the transformation.

  In detail, suppose that $\M_2^{j-1}$ is DSIC, yet $\M_2^{j,h}$ is not.
  It must be Alice who has a strategic manipulation, because Bob has a
  dominant strategy for any mechanism implementing $f$.
  Thus, there must be a strategy
  $s_B$ of Bob in $\M_2^{j,h}$ such that
  $S_A(t_A)$ is not a best response of Alice $\M_2^{j,h}$.
  Say that $s_A'$ gets higher utility for Alice when she has type $t_A$.
  Observe that $s_B$ also constitutes a strategy of Bob in $\M_2^{j-1}$,
  because Bob's information sets are in a bijection in $\M_2^{j-1}$ and $\M_2^{j,h}$.
  Moreover, we can define a strategy $\widetilde s_A$ in $\M_2^{j-1}$ which
  corresponds to $s_A'$ in a natural way:
  $\widetilde s_A$ exactly corresponds to $s_A'$ outside the transformed area,
  and inside the transformed area, if $s_A'$ takes action $A_{h,i}$ at $h$ and
  $(a_1, a_2)$ at $h''$ (for whichever $h''$ is on the path selected by
  Bob under $s_B$), then Alice takes action $a_1$ at $h$ and $a_2$ at $h''$
  in $\M_2^{j-1}$ (and the actions taken by $\widetilde{s_A}$ at other possible $h''$
  is selected arbitrarily).
  Then $\widetilde s_A, s_B$ achieves the same result in $\M_2^{j}$ as $s_A', s_B$
  does in $\M_2^{j,h}$, and thus Alice has a strategic manipulation in $\M_2^{j-1}$.
  Thus, $\M_2^{j-1}$ is not DSIC. So if $\M_2^{j-1}$ is DSIC, so is $\M_2^{j,h}$.

  We have shown that the above transformation correctly eliminates Bob's
  information sets in $\sigma_H(h)$ for a single Alice node $h\in H_A^j$.
  Observe that for distinct Alice nodes $h$ in layer $j$, the transformation
  from $\M_2^{j-1}$ to $\M_2^{j,h}$ is independent.
  Thus, applying the transformation in sequence for all Alice nodes in
  $H_A^j$ eliminates information sets in the layer of Bob nodes directly
  below $H_A^j$. 
  Construct $\M_2^j$ by applying the transformation on top of $\M_2^{j-1}$
  for all $h\in H_A^j$.
  By induction on $j$, we know that $\M_2^j$
  eliminates all information sets from node which are ancestors of
  $H_A^{j+1}$.
  Applying this for all layers $j=1,\ldots,L$ eliminates all information
  sets. Thus, define $\M' = \M_2^L$ to be the result of applying the
  transformation across all layers of Alice nodes.

  Finally, we bound the communication cost of $\M'$.
  For each Alice node $h$, at worst, the communication cost of $h$ is
  added to the Alice nodes two layers below $h$. So along any path the cost of
  each node is incurred at most twice, and all told, the cost of $\M$ is at most
  doubled.



\end{proof}

\begin{remark}
  If the above transformations are applied to a two player mechanism in
  which Bob does not have constrained incentives, the result is a perfect
  information mechanism with the following properties:
  the mechanism is DSIC for Alice, but only EPIC for Bob.
  That is, truth-telling is a best response for Alice for any strategy Bob
  plays, but truth-telling is only guaranteed to be a best response of Bob
  when Alice plays a strategy consistent with $S_A(t_A)$ for some type
  $t_A$.
\end{remark}

This second lemma captures an intuitive property:
if there is information irrelevant to the outcome of the social choice function,
then mechanisms need not ask about that information (neither to compute the
function, nor to guarantee incentive properties).
This property feels especially necessary for games of perfect information.
Intuitively, we'd like to say that, if the mechanism ever asked a question that didn't
matter, then the mechanism could not hope to be DSIC, because the other
players could react maliciously to that irrelevant question in order to give the player
a worse outcome. This argument is not literally true, because the
mechanism may ask such a question at a point where the question has already
become irrelevant, for example, if the outcome is already
determined\footnote{
  If the mechanism is an implementation with transfers, then
  it is also possible that there are two Alice types $t, t'$ which do not
  matter from the point of view of the social choice function $f$,
  but which effect the transfers given to Bob.
  This is why we need to again assume that Bob gets constant utility in the
  next lemma.
}.
However, it is true that we can assume this property holds without loss of generality
in the games we are interested in.
We describe and prove this fact formally
using the language of Section~\ref{sec:TreeTypes}.

\begin{lemma}
  \label{lem:noIrrelevantQuestions}
  Consider any perfect information DSIC mechanism $\M$ for two players, Alice and
  Bob, implementing social choice function $f$ (with or without transfers).
  Suppose that in every outcome selected by $\M$, Bob
  receives zero transfers, and when Bob plays his truth-telling strategy,
  he gets constant utility $U$.
  Let $A_1,\ldots,A_k$ be any
  partition of the types of Alice such that $f$ does not depend on the
  difference between elements within that partition
  (that is, for any Alice types $a, a' \in A_i$, for all Bob types $b$,
  we have $f(a, b) = f(a', b)$).
  Then there exists another DSIC mechanism $\M'$ which implements $f$,
  whose communication cost is at most that of $\M$, which never
  asks Alice to differentiate between types in $A_i$.
  That is, for each leaf $\ell$ of $\M'$, if $\T_A(\ell) \cap A_i \ne \emptyset$,
  then $A_i \subseteq \T_A(\ell)$.
\end{lemma}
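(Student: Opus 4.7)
The plan is to construct $\M'$ by canonicalizing Alice's behavior within each partition class. For each $A_i$, fix a canonical representative $a_i^\star \in A_i$, and define $\M' = (G, S_A', S_B)$ whose game tree $G$ and Bob's type-strategy $S_B$ are inherited from $\M$, but whose Alice type-strategy is $S_A'(a) := S_A(a_i^\star)$ for every $a \in A_i$. Since $G$ is unchanged, the communication cost of $\M'$ equals that of $\M$. That $\M'$ implements $f$ is immediate: $\M'(S_A'(a), S_B(b)) = \M(S_A(a_i^\star), S_B(b))$ yields $f(a_i^\star, b) = f(a, b)$ by the partition hypothesis. The leaf property also follows easily: because $S_A'$ is constant on each $A_i$, every two types $a, a' \in A_i$ follow the same path in $\M'$ (for any fixed Bob strategy), so at any leaf $\ell$ of $\M'$ the set $A_i \cap \T_A(\ell)$ is either empty or all of $A_i$.

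The central preliminary observation needed for DSIC is that in any DSIC mechanism $\M$ satisfying the hypotheses, Alice's transfers are constant within each partition class, i.e., $p_A(a, b) = p_A(a_i^\star, b)$ for all $a \in A_i$ and $b \in \T_B$. Indeed, against $S_B(b)$ the deviation of $a$ to play $S_A(a_i^\star)$ yields outcome $(f(a_i^\star, b), p_A(a_i^\star, b)) = (f(a, b), p_A(a_i^\star, b))$; DSIC for $a$ then forces $p_A(a, b) \ge p_A(a_i^\star, b)$, and the symmetric argument gives equality. As a consequence, under Bob's truth-telling Alice's outcomes in $\M'$ coincide in both outcome and transfer with her outcomes in $\M$, so DSIC restricted to Bob truth-telling transfers directly from $\M$ to $\M'$.

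The hard part will be verifying DSIC of $\M'$ against Bob's arbitrary (non-truth-telling) strategies: types $a$ and $a_i^\star$ can have different utility functions, so $S_A(a_i^\star)$ --- which is dominant for $a_i^\star$ in $\M$ --- need not be a best response for $a$ against a Bob deviation. I plan to resolve this by additionally pruning $G$: at each Alice node $h$ where some $A_i$ is split across several actions, discard all non-canonical successors and route all of $A_i \cap \T_A(h)$ through the canonical action of $a_i^\star$, then recurse on the retained subtree. This shrinks the tree (so communication only decreases) and restricts Alice's available deviations in $\M'$ to strategies that agree with $S_A(a_i^\star)$ at every pruned node. Verifying that $S_A'(a)$ remains a best response for $a$ among these restricted strategies will require a careful backward-induction argument on the pruned tree, combining the transfer-constancy property with DSIC of $\M$ for type $a$ itself applied to the specific ``canonical'' deviations that survive the pruning.
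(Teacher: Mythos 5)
Your construction of $\M'$ matches the paper's: pick a representative $a_i^\star$ of each class, have every $a\in A_i$ mimic $S_A(a_i^\star)$, prune unreachable nodes.  Your observations about communication cost, implementing $f$, the leaf property, and the constancy of Alice's transfers on each $A_i$ (under Bob truth-telling) are all correct and do appear, in some form, in the paper.  You also correctly identify where the real content of the lemma lies: $S_A(a_i^\star)$ is dominant for $a_i^\star$ in $\M$, but it is not obvious that it remains a best response for $a\neq a_i^\star$ against an arbitrary (non-truth-telling) strategy of Bob in the pruned game.

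That is exactly where your proposal stops being a proof.  You defer this entirely to ``a careful backward-induction argument,'' state that it ``will require'' combining transfer constancy with DSIC of $\M$, and do not carry it out.  The remaining step is the crux of the lemma --- everything before it is straightforward bookkeeping --- so the proposal as written has a genuine gap.  Moreover, you would likely find the backward-induction route awkward: the relevant incentive constraint at a node $h$ involves arbitrary Bob continuation strategies below $h$, which are hard to tame by local comparisons, and DSIC of $\M$ for type $a$ tells you about deviations from $S_A(a)$, not deviations from $S_A(a_i^\star)$.  The paper instead argues the contrapositive globally: if Alice with type $t_A\in A_i$ had a profitable deviation $s_A'$ from $S_A'(t_A)$ against some Bob strategy $s_B$ in $\M'$, let $h$ be the first Alice node where $s_A'$ diverges.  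If $t_A$'s own truth-telling strategy in $\M$ also reaches $h$, the deviation lifts directly.  Otherwise $t_A$ and the representative $a_i^\star$ separate in $\M$ at some ancestor $h^\ast$ of $h$; using perfect information, one splices a Bob strategy $s_B^\ast$ in $\M$ that plays as a ``bad'' Bob type $t_B^b$ if Alice takes $t_A$'s action at $h^\ast$, but follows $a_i^\star$'s branch (and then $s_B$) otherwise, landing on a ``good'' Bob type $t_B^g$.  Transfer constancy and $f(t_A,\cdot)=f(a_i^\star,\cdot)$ ensure that truth-telling in $\M$ against $s_B^\ast$ reproduces the bad outcome while the corresponding deviation reproduces the good one, contradicting DSIC of $\M$.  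If you pursue this lemma, I recommend replacing the backward-induction plan with that explicit splicing argument; otherwise you would need to actually supply and verify the induction, which you have not done.
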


\begin{proof}

  For each $A_i$, pick an arbitrary representative $a_i\in A_i$.
  Simply create a new mechanism $\M'$ where all types in
  $A_i$ take whichever action is take by $a_i$, instead of whichever
  action they took before.
  More precisely, create a new mechanism $\M'$, over the social choice
  environment restricted to where Alice's types are $\{a_1,\ldots,a_k\}$,
  which corresponds with $\M$ on these types. Remove any nodes $h$ whose type
  sets $\T(h)$ are empty.
  Now, let this define a mechanism $\M'$ over the original environment in
  the natural way, specifically, each Alice type $a$ follows that action
  take by $a_i$ for whichever $i$ has $a\in A_i$.


  This creates $\M'$ with no increase in communication cost.
  By definition, $\M'$ still implements $f$.
  Thus, we just need to reason that incentive properties are preserved.
  Note that Bob always has a dominant strategy in $\M'$,
  as he still receives zero transfers and value $U$ for each outcome selected.
  If $\M$ is an implementation with transfers, then from a standard
  argument\footnote{
    To see this, recall that $\M$ is in particular EPIC with transfers,
    and thus implements $(f, p_A, p_B)$ for some transfers which satisfy
    $u_A(t_A, f(t_A, t_B)) + p_A(t_A, t_B)
      \ge u_A(t_A, f(t_A', t_B)) + p_A(t_A', t_B)$
    for all $t_A, t_B, t_A'$.
    If there were types $t_A, t_A', t_B$ such that
    $f(t_A, t_B) = f(t_A', t_B)$ yet $p_A(t_A) < p_A(t_A')$, then when
    Alice has type $t_A$, she could benefit by deviating to $t_A'$ when
    Bob has type $t_B$.
    Thus, the transfers to Alice are a function only of Bob's type and
    the outcome selected.
  } it follows that the transfers
  must be a function only of Bob's type and the outcome selected.
  Thus, $\M'$ and $\M$ give the same transfers to Alice\footnote{
    For mechanisms in which Bob receives nonzero transfers, $\M'$ could
    in principle give different transfers to Bob than $\M$.
  }.

  Suppose there is a strategic manipulation in $\M'$.
  In particular, in $\M'$ there must be a type for Alice $t_A\in A_i$,
  some strategy $s_B$ of Bob, and strategy $s_A'$ of Alice which dominates
  following $S_A(t_A)$.
  Consider the computation path under $(s_A', s_B)$ and $(S_A(t_A), s_B)$.
  Let $h$ denote the first node along these paths where
  strategic manipulation $s_A'$ takes a different action from $S_A(t_A)$
  (and note that $h$ must be an Alice node).
  There must be some Bob-types $t_B^g, t_B^b \in \T^{\M'}_B(h)$\footnote{
    For clarity, in this argument we let $\T^{\M}(h)$ denote the type-set at
    node $h$ in the original mechanism $\M$, and we let $\T^{\M'}(h)$ denote
    the type set in $\M'$ (whenever the node corresponding to $h$ still exists
    in $\M'$).
  } such that,
  when play proceeds according to $(s_A', s_B)$, the leaf node computed contains Bob type
  $t_B^g$, and when $(S_A(t_A), s_B)$ is played, the leaf contains $t_B^b$
  (intuitively, $t_B^g$ is the ``good type'' for Alice which Bob can
  pretend to be, and $t_B^b$ is the ``bad type'').
  Recall that $h$ is also a node in the original mechanism $\M$.
  
  If $t_A \in \T_A^{\M}(h)$, then it is easy to see that $\M$ also has a
  strategic manipulation
  ($s_B$ and $s_A'$ can both be extended to full strategies in $\M$,
  and $S_A(t_A)$ is still dominated by $s_A'$ when Alice has type $t_A$
  and Bob plays $s_B$).
  
  Otherwise, $t_A$ got moved into $\T_A^{\M'}(h)$ to be in the same node
  as some $t_A'$ such that $t_A, t_A' \in A_i$.
  Consider the paths taken by computing $(S_A(t_A), s_B), (S_A(t_A'), s_B)$
  in $\M$. Let $h^*$ denote the last point along both of these paths from the root to $h$ in
  which $t_A$ and $t_A'$ are in the same node. The Bob-types in this node are a
  superset of those in $h$, and thus, both $t_B^g$ and $t_B^b$ are in $h^*$.
  Moreover, because the game is perfect
  information, we can construct strategy $s_B^*$ of Bob in $\M$ as follows:
  if Alice follows the action taken by $t_A$ at $h^*$, act as if Bob has
  type $t_B^b$, but if Alice continues at each successive node to play the
  action played by $t_A'$, then Bob acts as if he has type $t_B^g$.
  When Bob plays $s_B^*$ and Alice plays $S_A(t_A)$, she gets whichever ``bad outcome'' she got
  under truthful play in $\M'$ (and she receives the same transfer as well).
  When Bob plays $s_B^*$ and Alice deviates and plays according to $S_A(t_A')$,
  she'll get the good outcome from $\M'$.
  Thus, truth-telling is not a best response of Alice in $\M$ 
  and $\M$ is not DSIC either.
  
  So if $\M$ is DSIC, then $\M'$ is DSIC as well.

\end{proof}

\end{document}